\documentclass[10pt,aps,pra,onecolumn,superscriptaddress,floatfix,nofootinbib,amsmath,amsfonts,amssymb]{revtex4-2}

\usepackage{amsthm}
\usepackage{braket}
\usepackage{graphicx}
\usepackage{xcolor}
\usepackage{fullpage}
\usepackage{microtype}
\usepackage{booktabs}
\usepackage{hyperref}
\usepackage[capitalise,nameinlink]{cleveref}
\usepackage{float}

\definecolor{red4}{Hsb}{0,0.79,0.72}
\definecolor{blue4}{Hsb}{240,0.65,0.85}
\definecolor{purple4}{Hsb}{330,0.79,0.70}
\hypersetup{
    colorlinks,
    citecolor=blue4,
    linkcolor=red4,
    urlcolor=purple4,
    breaklinks=true,
}

\newtheorem{theorem}{Theorem}

\newtheorem{lemma}[theorem]{Lemma}

\newtheorem{definition}{Definition}
\newtheorem*{remark}{Remark}

\newenvironment{manualtheorem}[1]{%
  \IfBlankTF{#1}
    {}
    {}%
  \manualtheoreminner
}{\endmanualtheoreminner}
\newenvironment{manuallemma}[1]{%
  \IfBlankTF{#1}
    {}
    {}%
  \manuallemmainner
}{\endmanuallemmainner}

\DeclareMathOperator{\diag}{diag}
\DeclareMathOperator{\re}{Re}

\DeclareMathOperator{\erf}{erf}

\crefname{section}{Sec.}{Secs.}

\newcommand{\ketbra}[2]{\left\vert{#1}\middle\rangle\middle\langle{#2}\right\vert}

\makeatletter
\newcommand{\apptocfile}{atoc}
\let\apptoc@orig@appendix\appendix
\renewcommand{\appendix}{%
  \apptoc@orig@appendix
  \let\apptoc@orig@addtocontents\addtocontents
  \long\def\addtocontents##1##2{%
    \def\apptoc@ext{##1}%
    \def\apptoc@toc{toc}%
    \ifx\apptoc@ext\apptoc@toc
      \apptoc@orig@addtocontents{\apptocfile}{##2}%
    \else
      \apptoc@orig@addtocontents{##1}{##2}%
    \fi
  }%
}
\newcommand{\appendixtableofcontents}{%
  \begingroup
    \setcounter{tocdepth}{3}%
    \phantomsection
    \let\addcontentsline\@gobblethree
    \section*{Contents}%
    \pdfbookmark[1]{Appendices}{apxcontents}%
    \@starttoc{\apptocfile}%
  \endgroup
}
\makeatother

\allowdisplaybreaks

\begin{document}
\title{\texorpdfstring{QKAN: Quantum Kolmogorov-Arnold Networks with Applications \\in Machine Learning and Multivariate State Preparation}{QKAN: Quantum Kolmogorov-Arnold Networks with Applications in Machine Learning and Multivariate State Preparation}}

\author{Petr Ivashkov}
\email{pivashkov@ethz.ch}
\affiliation{Centre for Quantum Technologies, National University of Singapore, Singapore}
\affiliation{Department of Information Technology and Electrical Engineering, ETH Zürich, Zürich, Switzerland}
\author{Po-Wei Huang}
\affiliation{Centre for Quantum Technologies, National University of Singapore, Singapore} 
\affiliation{Mathematical Institute, University of Oxford, Oxford, United Kingdom} 
\author{Kelvin Koor}
\affiliation{Centre for Quantum Technologies, National University of Singapore, Singapore}
\author{Lirandë Pira}
\email{lpira@nus.edu.sg}
\affiliation{Centre for Quantum Technologies, National University of Singapore, Singapore}
\author{Patrick Rebentrost}
\email{patrick@comp.nus.edu.sg}
\affiliation{Centre for Quantum Technologies, National University of Singapore, Singapore}
\affiliation{Department of Computer Science, National University of Singapore, Singapore}

\begin{abstract}
We introduce quantum Kolmogorov-Arnold networks (QKAN), a quantum algorithmic framework inspired by the recently proposed Kolmogorov-Arnold Networks (KAN). QKAN inherits the compositional structure of KAN and is based on block-encodings, constructed recursively from a single layer using quantum singular value transformation. We demonstrate the algorithmic utility of QKAN in two applications. First, we introduce and analyze QKAN as a quantum learning model, treating the eigenvalues of block-encoded matrices as neurons and applying parametrized activation functions on the edges of the network. We show that QKAN is a wide-and-shallow neural architecture, where shallow depth is compensated by exponentially wide layers whenever efficient block-encodings of inputs are available. We further discuss how to parametrize and train QKAN using parametrized quantum circuits and quantum linear algebra subroutines. Second, we demonstrate that QKAN can serve as a multivariate quantum state-preparation protocol for functions with shallow compositional structure. We demonstrate this by efficiently preparing a multivariate Gaussian quantum state using a two-layer QKAN. Looking forward, we anticipate that QKAN’s compositional and modular design will enable new applications in quantum machine learning and quantum state preparation.
\end{abstract}

\maketitle

\section{Introduction} \label{sec:introduction}

Kolmogorov-Arnold representation theorem (KART) states that any continuous function of multiple variables can be decomposed using two layers of composition and summation of univariate functions~\citep{kolmogorov_representation_1956, kolmogorov_representation_1957,arnold_functions_1957, arnold_representation_1959}. Recently,  \citet{liu2025kan} extended this compositional structure beyond two layers, providing an alternative neural network design aimed at offering advantages over traditional feedforward multilayer perceptrons (MLPs)~\citep{krizhevsky_imagenet_2012, goodfellow_deep_2016, lecun_deep_2015, he2015deepresiduallearningimage}. Although KANs do not inherit the universal representation property of KART, their structure, based on compositions of parametrized univariate activation functions, can yield better interpretability and improved accuracy on small-scale tasks~\citep{liu2025kan}. In scientific applications, where many target functions admit symbolic formulas, KANs can reveal modular structure and potentially aid in the discovery of new physical laws, making them a promising tool for scientific discovery~\citep{liu2024kan20}. The KAN architecture has inspired multiple extensions and applications, including Convolutional KANs~\citep{bodner_convolutionalkolmogorov_2024}, Graph KANs~\citep{kiamari2024GKAN, decarlo2024KAGN}, Chebyshev KANs~\citep{ss2024chebyshevpolynomialbasedkolmogorovarnoldnetworks}, KANs for quantum circuits~\citep{kundu2024kanqaskolmogorovarnoldnetworkquantum} and others~\citep{bozorgasl2024wavKAN,genet2024KANtransformer, wang2024KANinformed, aghaei2024rationalkan, xu2024fourierKANGCF, aghaei2024fKAN, zhang2024graphKAN}.

In this work, motivated by the potential of KANs in the classical setting, we introduce a quantum version, QKAN, a structured quantum architecture that leverages the quantum singular value transformation (QSVT) to apply nonlinear transformations. QSVT applies polynomial transformations to the singular values of a matrix encoded as a block of a unitary (a \textit{block-encoding}), utilizing the power of quantum computers to manipulate exponentially large unitary operators efficiently~\citep{gilyen2019quantum}. QSVT has seen widespread adoption as a quantum meta-algorithm, both rederiving previous algorithms~\citep{rall2023amplitude, martyn2021grand} and designing new efficient quantum algorithms~\citep{gilyen2022improved,wang2023quantum,wang2024new,li2024exponential,qiu2024quantum}. QKAN uses block-encodings as its input and output model, representing both the input and output vectors as block-encoded diagonal operators, which can be manipulated using quantum linear algebra subroutines. We demonstrate the algorithmic utility of QKAN in two applications.

First, we introduce and analyze QKAN as a quantum learning model. In quantum machine learning, models are developed using quantum mechanical principles~\citep{schuld_machine_2021, biamonte_quantum_2017, beer_training_2020}. Existing approaches include variational quantum algorithms (VQAs) employing parametrized quantum circuits whose parameters are optimized to minimize a cost function~\citep{cerezo2021variational, havlicek2019supervised,mitarai2018quantum, farhi2018classification, benedetti2019parameterized, schuld2019quantum}, similar to MLPs. Their generalization, expressibility, and interpretability have been extensively studied in Refs.~\citep{du2020expressive, abbas2021power, schuld2021effect, banchi_generalization_2021,holmes_connecting_2022, caro2022generalization, pira_interpretabilityquantumneuralnetworks_2024}. In the fault-tolerant regime, various quantum implementations of classical machine learning algorithms have been proposed, including support vector machines~\citep{rebentrost_quantum_2014}, deep convolutional neural networks~\citep{kerenidis2020qcnn}, transformers~\citep{guo_quantum_2024}, and various others~\citep{allcock2020quantum, rebentrost2018quantum, amin2018quantum, kapoor2016quantum}. Contrary to previous architectures, QKAN treats the eigenvalues of block-encoded matrices as neurons and applies parametrized activation functions on network edges via linear combinations of Chebyshev polynomials, or other basis functions that can be realized efficiently using QSVT. The gate complexity of QKAN scales linearly with the cost of constructing the block-encoding of an $N$-dimensional input vector, which in certain cases, such as for inherently quantum inputs, can be $\mathcal{O}(\mathrm{polylog}(N))$. At the same time, composing layers incurs an exponential overhead in depth due to the recursive QSVT-based construction, so QKAN is naturally constrained to be shallow. This makes QKAN a wide-and-shallow architecture: when efficient block-encodings are available, a shallow QKAN can realize exponentially wide layers at a polylogarithmic cost, a regime that is inaccessible to classical neural networks. For example, given access to a quantum unitary that prepares a $N$-dimensional quantum state of interest efficiently, we can process that state by computing a multivariate function of its amplitudes in $\mathcal{O}(\mathrm{polylog}(N))$ time, assuming that the target function admits an efficient polynomial approximation. Such an operation generally requires $\mathcal{O}(N)$ classical runtime. We note that although we implement QKAN with Chebyshev polynomials to facilitate training and interpretability, QKAN is not restricted to the Chebyshev basis and can employ any bounded-degree, bounded-range polynomials realizable via QSVT.

Second, we demonstrate that QKAN can serve as a multivariate quantum state-preparation protocol. The goal of quantum state preparation is to prepare a quantum state, for example, for use in other quantum algorithms. The problem of loading univariate functions has been extensively investigated in the prior literature~\citep{grover2000synthesis,plesch2011quantum,sanders2019blackbox,zhang2022quantum,mcardle2022quantum,rattew2023nonlineartransformationsquantumamplitudes,gonzalezconde2024efficient}. However, extensions to multivariate state preparation remain scarce despite their importance~\citep{rosenkranz2025quantum, bauer2021practical, manabe2025state}. We illustrate how QKAN's compositional circuitry can efficiently prepare families of multivariate high-dimensional distributions by exploiting their compositional structure. As a concrete example, we show that a two-layer QKAN can efficiently load a $D$-dimensional Gaussian distribution into a quantum state. 

\begin{figure}[tb]
    \centering
    \includegraphics[width=\textwidth]{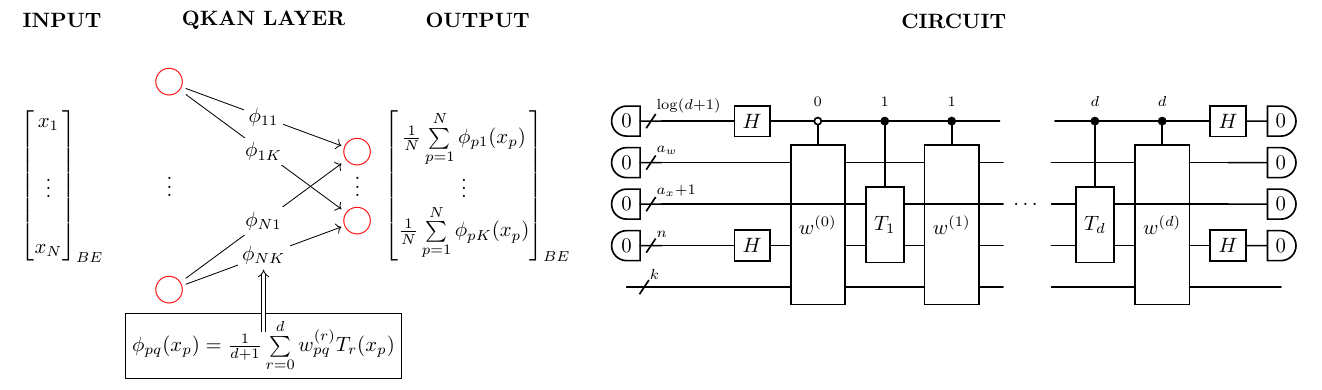}
    \caption{\textit{Construction of a CHEB-QKAN layer with the corresponding quantum circuit.} The input to the QKAN model is a diagonal block-encoding of an $N$-dimensional real vector $\vec{x}$. The CHEB-QKAN layer applies univariate activation functions $\phi_{pq}$ to each input component $x_p$, where $p\in[N]$ indexes input nodes and $q\in[K]$ indexes output nodes. The output vector is computed as a sum over activated input nodes. This operation yields a block-encoded real $K$-dimensional output vector. The quantum circuit implementation requires $1+\log_2(d+1)$ qubits for the construction and linear combination of weighted Chebyshev polynomials, $a_w + a_x$ qubits for the block-encodings of input and weights, $n=\log_2 N$ qubits for input vector encoding, and $k=\log_2 K$ qubits for output. The circuit consists of a series of multi-controlled block-encodings of Chebyshev polynomials, interspersed with diagonal block-encodings of the corresponding real weights. The entire circuit represents a block-encoding of the $K$-dimensional vector corresponding to the CHEB-QKAN layer, with auxiliary qubits initialized and measured in the $\ket{0}$ state. }
    \label{fig:MAIN}
\end{figure}

\section{Results}\label{sec:results}

\subsection{Notation and Preliminaries}\label{sec:preliminaries_and_notation}
Throughout this manuscript, $N$ denotes the dimension of input, assumed to be a power of two, and $n = \log_2(N)$ represents the number of qubits. $K$ denotes the dimension of the output and, similarly, $k = \log_2(K)$. The subscript $n$ in $\ket{\psi}_n$ and $U_n$ indicates the size of the system in terms of qubits. For a vector $v$, $\|v\|_p$ is the $\ell_p$-norm of $v$. For a matrix $A$, $\|A\|$ is the spectral norm of $A$. Further, $\diag(x_1, x_2, \dots, x_N)$ represents a diagonal matrix whose diagonal entries are $x_1, x_2, \dots, x_N$. Here, $T_r(x) \in \mathbb{R}[x]$ is the $r$-th Chebyshev polynomial of the first kind, defined as $T_r(x) := \cos(r \arccos(x))$. We denote $\mathbb{R}$[x] as the set of all polynomials with real coefficients in the variable $x$. We adopt the convention that indices in summations run from 1 to the upper limit of the sum and use $[N]$ to denote the set $\{1,\dots,N\}$.

\subsection{Kolmogorov-Arnold Networks (KAN)}\label{sec:KAN}
Kolmogorov-Arnold representation theorem states that any continuous multivariate function can be represented as a composition of univariate functions with the summation~\citep{kolmogorov_representation_1956, arnold_functions_1957, braun_constructive_2009}. Formally, for any continuous function $f: [0,1]^N \to \mathbb R$, there exist continuous inner functions $\phi_{pq}:[0,1]\to \mathbb R$ (independent of $f$) and outer functions $g_q : \mathbb R \to \mathbb R$ (dependent of $f$) such that
\begin{equation}
    f(x_1,\cdots,x_N) = \sum_{q=1}^{2N+1} g_q\left (\sum_{p=1}^N \phi_{pq}\left(x_p\right)\right).
\end{equation}

In the context of neural networks, KART has been studied to explain how deep learning can overcome the curse of dimensionality, with one approach involving the approximation of the inner and outer functions of the KART representation using neural networks. However, the practicality of this approach is limited by the high nonsmoothness of the inner and outer functions, even when the original function is smooth, posing significant challenges in accurate approximation and robustness to noise.

\citet{liu2025kan} proposed generalizing the compositional structure of KART to include more layers. This architecture, called Kolmogorov-Arnold Network, can contain an arbitrary number of layers, as opposed to the two layers guaranteed in KART. Here, we define KAN as outlined in the original study and only slightly adapt the notation of Ref.~\citep{liu2025kan}.
\begin{definition}[KAN layer,~\citep{liu2025kan}]
    Define a KAN layer as a transformation $\Phi: \mathbb{R}^N \rightarrow \mathbb{R}^K$ that takes a real vector $\vec{x}$ as input and outputs a real vector $\Phi(\vec{x})$ such that
	\begin{equation}
		\Phi(\vec{x}) = \left(\sum_{p=1}^N \phi_{p1}(x_p), \dots, \sum_{p=1}^N \phi_{pK}(x_p)\right)^\intercal,
	\end{equation}
	where $\phi_{pq}: \mathbb{R} \rightarrow \mathbb{R}$ are univariate functions.
\end{definition}

This transformation can be interpreted as placing activation functions $\phi_{pq}$ on the edges connecting the input nodes $p \in [N]$ to the output nodes $q \in [K]$ of a single layer neural network and applying summation on the output nodes.

\begin{definition}[KAN,~\citep{liu2025kan}]
    Define KAN as a neural network architecture consisting of concatenated KAN layers, where the output of the previous layer serves as the input to the next one. Let $L$ be the number of KAN layers and let an integer array $[N^{(0)}, N^{(1)}, \dots, N^{(L)}]$ be given, where $N^{(l)}$ denotes the number of nodes in the $l$-th KAN layer. The KAN output, denoted by $\mathrm{KAN}(\vec{x})$, is the composition of the individual layers:
	\begin{equation}
		\mathrm{KAN}(\vec{x}) = \Phi^{(L)} \circ \dots \circ \Phi^{(1)}(\vec{x}),
	\end{equation}
	where each $\Phi^{(l)}: \mathbb{R}^{N^{(l-1)}} \rightarrow \mathbb{R}^{N^{(l)}}$ is specified by an array of univariate functions $\{\phi_{pq}^{(l)}\}$.
\end{definition}

As per the definition above, in KAN, the univariate functions $\phi_{pq}$ are parametrized as linear combinations of basis functions. The choice of a basis can be tailored to the specific application. For example, the original KAN implementation used $B$-splines defined on a grid. Subsequent works considered wavelets, Chebyshev polynomials, and Fourier expansion.

It is crucial to emphasize that KAN does not inherit the universal representation power of KART because the inner and outer functions of KART may not be learnable in practice~\citep{poggio1989kanirrelevant, akashi2001application, poggio2020theoretical}. Therefore, there is no guarantee that a deep KAN can represent any given multivariate function. Nevertheless, KAN appears to be successful in certain applications, particularly in science. For example, KAN has outperformed MLPs in learning symbolic functions commonly found in physics~\citep{liu2025kan, liu2024kan20}. Additionally, KAN offers a significant interpretability advantage: individual activation functions can be inspected, and the network can be pruned by removing functions that closely resemble zero functions, potentially discovering sparse compositional structures. Compared to MLPs, KANs have displayed the property of efficiency in certain cases and exhibit a lower spectral bias toward lower frequencies~\citep{wang2024expressiveness}.

\subsection{Quantum Kolmogorov-Arnold Network (QKAN)}\label{sec:QKAN}
In this section, we establish the main contribution of our work. Namely, we develop a quantum implementation of the Kolmogorov-Arnold Network (QKAN). QKAN is designed to realize the classical KAN on a quantum computer and, as such, consists solely of unitary transformations. This implementation leads to several technical differences between KAN and QKAN. Firstly, QKAN operates on block-encodings of vectors rather than directly on vectors themselves. In this representation, the vector is encoded in the diagonal of the top-left block of a unitary matrix. Secondly, due to the constraints of unitarity, the elements of the vectors are bounded in magnitude by one. Throughout this manuscript, we frequently utilize block-encodings of diagonal matrices. For clarity, we adopt the term ``diagonal block-encoding of a vector'' to refer to a block-encoding of a diagonal matrix where the diagonal elements correspond to the elements of the vector, as shown in \cref{fig:BE}.
\begin{figure}[H]
    \centering
    \includegraphics[width=0.4\textwidth]{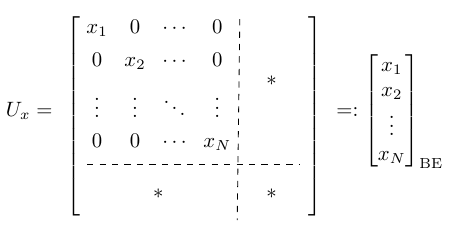}
    \caption{\textit{Diagonal block-encoding.} The top left block is a diagonal matrix whose entries are the components of an $N$-dimensional vector $\vec{x}$, while the remaining matrix blocks, denoted by asterisks ($*$), are unspecified.}
    \label{fig:BE}
\end{figure}

In this work, we limit the discussion to real values; in particular, the input, output, and weights of the QKAN model are assumed to be real. A generalization to complex numbers is possible without a significant increase in complexity by treating real and imaginary parts separately and is a direction for future work. 

We define the QKAN layer and the full QKAN model in analogy to their classical counterparts.

\begin{definition}[QKAN Layer]
    Define a QKAN layer as a transformation that, given query access to a diagonal $(1,a_x,\varepsilon_x)$-block-encoding of $\vec{x} \in [-1,1]^N$, constructs a diagonal $(1,a_x',\varepsilon_x')$-block-encoding of $\Phi(\vec{x}) \in [-1,1]^K$ such that
	\begin{equation}
		\Phi(\vec{x}) = \left( \frac{1}{N}\sum_{p=1}^N \phi_{p1}(x_p), \dots, \frac{1}{N}\sum_{p=1}^N \phi_{pK}(x_p) \right)^\intercal,
	\end{equation}
	where $\phi_{pq}: [-1,1] \rightarrow [-1,1]$ are univariate functions.
\end{definition}

\begin{definition}[QKAN]
    Define QKAN as a composition of QKAN layers, where the block-encoding produced by one layer serves as the input to the next one. Let $L$ be the number of layers in the QKAN architecture and let an integer array $[N^{(0)}, N^{(1)}, \dots, N^{(L)}]$ be given, where $N^{(l)}$ represents the number of nodes in the $l$-th layer. The QKAN output, denoted by $\mathrm{QKAN}(\vec{x})$, is a diagonal block-encoding of an $N^{(L)}$-dimensional vector, constructed recursively from the composition of layers:
	\begin{equation}
		\mathrm{QKAN}(\vec{x}) = \Phi^{(L)} \circ \dots \circ \Phi^{(1)}(\vec{x}),
	\end{equation}
	where each $\Phi^{(l)}: [-1,1]^{N^{(l-1)}} \rightarrow [-1,1]^{N^{(l)}}$ is specified by an array of univariate functions $\{\phi_{pq}^{(l)}\}$.
\end{definition}

\begin{remark}
    It is important to note that the term ``layer'' in QKAN may be slightly misleading. Unlike in classical KAN, where layers are concatenated, a QKAN layer serves as a primitive building block for the subsequent layer, leading to a recursive construction.
\end{remark}

Similarly to KAN, in QKAN the univariate functions $\phi_{pq}$ are parametrized as linear combinations of basis functions. For QKAN, Chebyshev polynomials are a natural choice of basis because they can be efficiently implemented within the qubitization framework. We define CHEB-QKAN as a QKAN where the activation functions $\phi_{pq}$ are expressed as linear combinations of Chebyshev polynomials, an idea previously explored in Ref.~\citep{ss2024chebyshevpolynomialbasedkolmogorovarnoldnetworks}:
\begin{equation}
    \phi_{pq}(x) = \frac{1}{d+1}\sum_{r=0}^d w_{pq}^{(r)} T_r(x),
\end{equation}
where $w_{pq}^{(r)} \in [-1,1]$ are linear coefficients. With these definitions, we can now state the main result of this work.

\begin{theorem}[CHEB-QKAN]
\label{theorem:cheb_qkan}
    Given access to a controlled diagonal $(1, a_x, \varepsilon_x)$-block-encoding $U_x$ of an input vector $\vec{x} \in [-1,1]^{N}$, and access to $d+1$ controlled diagonal $(1, a_w, \varepsilon_w)$-block-encodings $U_{w^{(r)}}$ of weight vectors $\vec{w}^{(r)} \in [-1,1]^{NK}$, we can construct a diagonal $(1, a_x + 1 + a_w + \log_2(d+1) + n, 4d\sqrt{\varepsilon_x} + \varepsilon_w)$-block-encoding of a vector $\Phi(\vec{x}) \in [-1,1]^K$ corresponding to the \emph{CHEB-QKAN} layer
    \begin{equation}
        \Phi(\vec{x}) = \left( \frac{1}{N}\sum_{p=1}^{N}\phi_{p1}(x_p), \dots, \frac{1}{N}\sum_{p=1}^{N}\phi_{pK}(x_p)\right)^\intercal,
    \end{equation}
    where $d$ is the maximal degree of Chebyshev polynomials used in the parameterization of activation functions $\phi_{pq}$, using $\mathcal{O}\left(d^2\right)$ applications of controlled-$U_x$ and controlled-$U_{w^{(r)}}$ and their adjoint versions.
\end{theorem}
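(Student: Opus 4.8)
The plan is to build the block-encoding of $\diag(\Phi(\vec{x}))$ in four stages that mirror the structure of the target entry
$\Phi(\vec{x})_q = \frac{1}{N(d+1)}\sum_{p=1}^N\sum_{r=0}^d w_{pq}^{(r)}\,T_r(x_p)$,
carrying the normalization, ancilla count, and error through each stage, and finally checking that the four increments $1$, $a_w$, $\log_2(d+1)$, and $n$ reproduce the claimed ancilla register exactly.

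First I would generate the Chebyshev images of the input. Viewing $U_x$ as a block-encoding of the Hermitian diagonal matrix $\diag(\vec{x})$, I apply \cref{proposition:chebyshev_qsp} together with \cref{theorem:qet_of_hermitian_matrices} to obtain diagonal block-encodings of $T_r(\diag(\vec{x})) = \diag(T_r(x_1),\dots,T_r(x_N))$. The efficiency-critical point — and the step I expect to be the main obstacle — is to realize the whole family $\{T_r\}_{r=0}^d$ with only $\mathcal{O}(d)$ controlled calls to $U_x$, rather than the $\sum_{r=0}^d r = \mathcal{O}(d^2)$ one would spend synthesizing each degree independently. I would do this by introducing a $\log_2(d+1)$-qubit ``degree register'' and applying controlled powers of a single qubitization walk operator $W$ built from $U_x$: since the powers of $W$ commute, conditioning $W^{2^j}$ on bit $j$ of the degree register implements $\sum_r \ket{r}\bra{r}\otimes W^r$ at total cost $\sum_{j} 2^j = \mathcal{O}(d)$, and injecting the weights between walk steps keeps the circuit in the ``interspersed'' form of \cref{fig:MAIN}. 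This stage contributes one Chebyshev ancilla (the ``$+1$'', using the relaxed count for $T_r$ noted after \cref{theorem:qet_of_hermitian_matrices}) and the $\log_2(d+1)$ degree register, and sends the input error $\varepsilon_x$ to $4d\sqrt{\varepsilon_x}$.

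Second I would multiply in the weights. Regarding the Chebyshev encoding as acting on the $n$-qubit input register tensored with $I_K$ on the $k$-qubit output register, I combine it with the degree-multiplexed weight oracle $\sum_r\ket{r}\bra{r}\otimes U_{w^{(r)}}$ via \cref{proposition:product_BE} (equivalently \cref{proposition:hadamard_product}, since both operands are diagonal and the Hadamard product of diagonals coincides with their ordinary product). Conditioned on $\ket{r}$ this places $w_{pq}^{(r)}T_r(x_p)$ on the $(p,q)$-diagonal; the step adds the $a_w$ weight ancillas and, since every normalization is $1$, adds $\varepsilon_w$ to give the stated error $4d\sqrt{\varepsilon_x}+\varepsilon_w$. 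Third, I collapse the degree register with a uniform $(1,\log_2(d+1),0)$-state-preparation pair (coefficients $1/(d+1)$, whose $\ell_1$-norm is exactly $1$) and invoke \cref{proposition:GSLW_linear_combination_BE}: the linear combination yields $\frac{1}{d+1}\sum_r \diag(w_{pq}^{(r)}T_r(x_p)) = \diag(\phi_{pq}(x_p))$, reusing the already-counted degree register and, with $\beta=1$ and exact state preparation, leaving normalization and error untouched.

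Finally I would average over the input index to recover the $q$-indexed output. Sandwiching the $n$-qubit input register with Hadamards and post-selecting on $\ket{0^n}$ gives $(\bra{0^n}H^{\otimes n}\otimes I_K)\,\diag(\phi_{pq}(x_p))\,(H^{\otimes n}\ket{0^n}\otimes I_K)$, which is the $K$-dimensional diagonal matrix whose $q$-th entry is $\frac1N\sum_p \phi_{pq}(x_p)$, i.e.\ exactly $\diag(\Phi(\vec{x}))$. This simultaneously supplies the $1/N$ factor and reduces the encoded block from dimension $NK$ to $K$, folds the $n$-qubit input register into the ancilla (the ``$+n$''), introduces no further error, and preserves $\alpha=1$; validity of the $(1,\dots)$-block-encoding then follows from $|\Phi(\vec{x})_q|\le \frac1N\sum_p|\phi_{pq}(x_p)|\le 1$. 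Collecting the four stages yields a diagonal $(1,\,a_x+1+a_w+\log_2(d+1)+n,\,4d\sqrt{\varepsilon_x}+\varepsilon_w)$-block-encoding of $\Phi(\vec{x})$ using $\mathcal{O}(d)$ controlled applications of $U_x$, $U_{w^{(r)}}$ and their adjoints, as claimed.
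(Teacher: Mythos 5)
Your proposal follows the same five-stage skeleton as the paper's construction in \cref{subsection:qkan_construction} (DILATE via tensoring with $I_k$, Chebyshev block-encodings via QSP, weight multiplication, LCU over degrees with uniform coefficients, and the Hadamard-sandwich SUM step), and your bookkeeping of the normalization, the ancilla increments $1$, $a_w$, $\log_2(d+1)$, $n$, and the error $4d\sqrt{\varepsilon_x}+\varepsilon_w$ matches the paper's exactly. Where you genuinely diverge is the step you yourself flag as critical: the synthesis of the SELECT oracle $\sum_r\ket{r}\bra{r}\otimes U_{T_r}$. The paper builds each $U_{T_r}$ independently from $r$ interleaved applications of $U_x$ and $U_x^\dagger$ and concatenates the multi-controlled versions, which costs $\sum_{r=0}^d r=\Theta(d^2)$ queries --- this is precisely the $\frac{d^2}{2}C_x^{(l)}$ entry in \cref{tab:resource_requirements}, the $(d^2/2)^L$ recursion in \cref{subsec:resource_analysis}, and the $\mathcal{O}(d^2)$ invoked in the proofs of \cref{theorem:amplitude_estimation_cheb_qkan,theorem:state_preparation_via_cheb_qkan}. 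Your alternative --- a degree register driving controlled powers $W^{2^j}$ of a single qubitization walk operator, exploiting that the $r$-th Chebyshev polynomial is block-encoded by the $r$-th power of a fixed iterate --- brings the cost down to $\sum_j 2^j=\mathcal{O}(d)$ and is, as far as I can tell, the only reading under which the literal $\mathcal{O}(d)$ in the theorem statement is achieved; the paper's own construction does not attain it, and the statement sits in tension with the paper's downstream accounting. One technical point you should not gloss over: the phase sequence in \cref{proposition:chebyshev_qsp} has an $r$-dependent initial phase $\phi_1=(1-r)\frac{\pi}{2}$, and the circuit alternates $U_x$ with $U_x^\dagger$, so the family $\{U_{T_r}\}$ is a family of powers of one fixed unitary only after you properly qubitize $U_x$ (costing the one extra ancilla already counted in the ``$+1$'') or absorb the degree-dependent initial phase into a cheap rotation controlled on the degree register. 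With that repair made explicit, your argument is sound and in fact proves a stronger query bound than the paper's own construction delivers.
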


In the above theorem, we construct QKAN using Chebyshev polynomials. However, we emphasize that QKAN is not limited to this particular basis set. In fact, due to the versatility of the QSVT framework, quantum implementations of other versions of KANs can be realized using efficient polynomial approximation of a wide range of basis functions~\citep{chao2020finding, dong2021efficient, martyn2021grand, gilyen2019quantum}. For example, to implement the $B$-spline construction for KANs in \citet{liu2025kan}'s original paper, each individual spline can be implemented by first separating each piecewise section and then taking its sum by the linear combinations of unitaries method (LCU)~\citep{berry2015simulating}. Each individual piecewise polynomial can then be implemented by multiplying a polynomial function constructed via QSVT with a threshold function formed by the sum of two Heaviside functions, which can in turn be approximated using polynomial approximations to the $\erf$ function via Lemma 10 and Corollary 5 of Ref.~\citep{low2017hamiltonian}. 

\subsection{Implementing CHEB-QKAN}\label{sec:implementation}

We prove the main~\cref{theorem:cheb_qkan} by presenting a detailed construction of CHEB-QKAN in \cref{subsection:qkan_construction}. Our construction of CHEB-QKAN relies on three basic operations: addition, multiplication, and QSVT. In a nutshell, we implement parametrized activation functions between nodes of two layers by taking linear combinations of a fixed set of basis functions where each basis function is realized through QSVT. After having constructed a single CHEB-QKAN layer that transforms a diagonal block-encoding of an $N$-dimensional input vector into a diagonal block-encoding of a $K$-dimensional output vector, the obtained block-encoding can be used as the input to the next layer by serving as the starting point for the next layer’s construction. One can immediately see that recursively transforming block-encodings in this manner results in a gate complexity that grows exponentially with the number of layers. This is because every output block-encoding is used as the elementary building block in the subsequent layer. Additionally, the total number of auxiliary qubits required increases linearly with the number of layers $L$. Finally, if the block-encodings of the input and weights are non-perfect, the error propagates recursively with every new layer, resulting in an amplified error in the output block-encoding. In summary, the recursive error propagation  and the exponential dependency of circuit depth on the number of layers limit QKAN to a shallow, i.e., $L = \mathcal{O}(1)$, albeit wide, architecture. These considerations are made precise in Supplementary Note 2.

\subsection{Application I: Quantum learning model}\label{sec:application_learning_model}
In this section, we introduce QKAN as an end-to-end quantum learning framework and outline its input and output models. In \cref{subsec:parametrization,subsec:train,subsec:interpret} we further detail the parameterization, training, and interpretability of QKAN.

By \cref{theorem:cheb_qkan}, QKAN implements a unitary whose diagonal entries block-encode a $K$-dimensional output vector. To recover these outputs classically, we apply the unitary to a quantum computational basis state and estimate a designated amplitude that encodes a multivariate function of the input that can serve for regression or classification. An end-to-end quantum speedup arises when the quantum implementation of this multivariate function requires exponentially fewer resources than a classical algorithm. The four core components enabling this speedup in QKAN are the quantum input encoding, the parametrization via Chebyshev expansions, the training algorithm, and the output extraction. Here we focus on the single-layer CHEB-QKAN case; extending to an $L$-layer CHEB-QKAN -- and to general QKAN architectures -- proceeds similarly, with an additional exponential dependence on $L$ as discussed in Supplementary Note 2.

Constructing a diagonal block-encoding of a generic $N$-dimensional classical vector requires at least $\mathcal{O}(N)$ gates~\citep{plesch2011quantum}. Because QKAN’s complexity is measured in queries to the input block-encoding, we must therefore restrict to inputs that admit efficient block-encodings that can be prepared in $\mathcal{O}(\mathrm{polylog}(N))$ time. A natural setting is when the input is inherently quantum: for example, a unitary produced by a variational quantum algorithm that prepares an approximate ground state; the time-evolution unitary $e^{-iHt}$ encoding dynamical information; a block-encoding of a quantum Gibbs state $e^{-\beta H}$; or a block-encoding of a Hamiltonian $H$ via the LCU method. Given such a unitary, we propose two methods for efficient diagonal block-encoding of the input vector. The first method treats $U$ as an efficiently implementable state-preparation unitary and constructs a diagonal block-encoding of the amplitude vector $|\psi\rangle_n = U|0\rangle_n$ (\cref{subsec:input_block_encoding} \cref{lemma:diagonal_block_encoding}). As a result, QKAN computes a multivariate function of the quantum amplitudes. The second method forms the diagonal block-encoding by taking the Hadamard (entry-wise) product with the identity $U \circ I$, retaining only the diagonal entries of the unitary (\cref{subsec:input_block_encoding} \cref{lemma:removal_of_non_diagonal}).

To estimate the values within the diagonally block-encoded output vector $\mathrm{QKAN}(\vec x)$ to additive error $\delta$, one can leverage the Hadamard test~\citep{cleve1998quantum}, using $\mathcal{O}(1/\delta^2)$ queries to the controlled diagonal block-encoding $U_{\mathrm{QKAN}}$. Specifically, to obtain the value of the $q$-th entry of the output vector, we prepare the state $\ket{\psi} = (H\otimes I_{\mathrm{aux}+k})CU_{\mathrm{QKAN}}(H\otimes I_{\mathrm{aux}+k})\ket{0}\ket{0}_{\mathrm{aux}}\ket{q}_k$ and estimate the expectation value $\langle Z \rangle$ of the top qubit, as depicted in \cref{figHadamardTest}. As a result, we obtain $\mathrm{QKAN}(\vec{x})_q = \re(\bra{0}_{\mathrm{aux}}\bra{q}_kU_{\mathrm{QKAN}}\ket{0}_{\mathrm{aux}}\ket{q}_k)$ to $\delta$-precision. In addition, the Hadamard test can be combined with amplitude estimation~\citep{Brassard_2002} to reduce the number of queries to $CU_{\mathrm{QKAN}}$ to $\mathcal{O}(1/\delta)$. \cref{theorem:amplitude_estimation_cheb_qkan} makes this statement precise, with the proof deferred to Supplementary Note 3.
\begin{figure}[H]
\centering
    \includegraphics[width=0.3\linewidth]{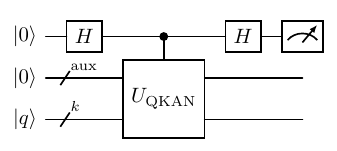}
    \caption{\textit{Circuit for solution extraction via Hadamard test.} By estimating the expectation value of Pauli-Z on the top qubit, the circuit retrieves the value $U_{\mathrm{QKAN},qq}$ to additive $\delta$-precision.}
    \label{figHadamardTest}
\end{figure} 
\begin{theorem}[Output estimation of CHEB-QKAN]
    \label{theorem:amplitude_estimation_cheb_qkan}
    Given access to a controlled diagonal $(1, a_x, \varepsilon_x)$-block-encoding $U_x$ of an input vector $\vec{x} \in [-1,1]^{N}$, and access to $d+1$ controlled diagonal $(1, a_w, \varepsilon_w)$-block-encodings $U_{w^{(r)}}$ of weight vectors $\vec{w}^{(r)} \in [-1,1]^{NK}$, we can estimate the value $\Phi(\vec{x})_q = \frac{1}{N}\sum_{p=1}^{N} \phi_{pq}(x_p)$ of the $q$-th component of the $\emph{CHEB-QKAN}$ layer to $\left( 4d\sqrt{\varepsilon_x} + \varepsilon_w + \delta\right)$-precision using $\mathcal{O}\left(d^{2}/\delta\right)$ applications of controlled-$U_x$ and controlled-$U_{w^{(r)}}$ and their adjoint versions. 
\end{theorem}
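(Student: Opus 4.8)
The plan is to reduce this estimation task to two ingredients that are already in place: the block-encoding $U_{\Phi}$ of the layer from \cref{theorem:cheb_qkan}, and a Hadamard test boosted by amplitude estimation to read off a single diagonal entry.

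First I would invoke \cref{theorem:cheb_qkan} to construct the diagonal $(1, a_x + 1 + a_w + \log_2(d+1) + n, 4d\sqrt{\varepsilon_x}+\varepsilon_w)$-block-encoding $U_{\Phi}$ of $\diag(\Phi(\vec{x})_1,\dots,\Phi(\vec{x})_K)$. By \cref{Definition:Block_encoding}, the top-left block $\bra{0}_{\mathrm{aux}} U_{\Phi}\ket{0}_{\mathrm{aux}}$ is $\varepsilon := 4d\sqrt{\varepsilon_x}+\varepsilon_w$ close in spectral norm to this diagonal matrix; in particular every diagonal entry $\bra{q}_k\bra{0}_{\mathrm{aux}} U_{\Phi}\ket{0}_{\mathrm{aux}}\ket{q}_k$ is within $\varepsilon$ of the target value $\Phi(\vec{x})_q = \frac{1}{N}\sum_{p=1}^N \phi_{pq}(x_p)$, since the magnitude of any single diagonal entry is bounded by the spectral norm. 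Recalling the resource analysis, building this $U_{\Phi}$ uses $\mathcal{O}(d^2)$ applications of controlled-$U_x$ (from the $\sum_{r=0}^d \mathcal{O}(r)$ cost of the \textbf{CHEB} step) and $\mathcal{O}(d)$ applications of the controlled weight block-encodings.

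Next I would extract the $q$-th diagonal entry with the Hadamard test depicted in \cref{figHadamardTest}: prepare $(H\otimes I_{\mathrm{aux}+k})\,CU_{\Phi}\,(H\otimes I_{\mathrm{aux}+k})\ket{0}\ket{0}_{\mathrm{aux}}\ket{q}_k$ and measure Pauli-$Z$ on the control qubit. The resulting expectation value equals $\re(\bra{q}_k\bra{0}_{\mathrm{aux}} U_{\Phi}\ket{0}_{\mathrm{aux}}\ket{q}_k)$, and since $\Phi(\vec{x})$ is real this real part is exactly the diagonal entry itself. Estimating this expectation by direct sampling would cost $\mathcal{O}(1/\delta^2)$ repetitions; instead I would recast the quantity as an amplitude and apply amplitude estimation~\cite{Brassard_2002} to obtain it to additive error $\delta$ with only $\mathcal{O}(1/\delta)$ queries to the controlled $U_{\Phi}$.

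Finally, composing the two costs gives $\mathcal{O}(d^2)\cdot\mathcal{O}(1/\delta) = \mathcal{O}(d^2/\delta)$ applications of the controlled input and weight block-encodings and their adjoints. For the error budget, the estimate is within $\delta$ of $\bra{q}_k\bra{0}_{\mathrm{aux}} U_{\Phi}\ket{0}_{\mathrm{aux}}\ket{q}_k$, which is itself within $\varepsilon = 4d\sqrt{\varepsilon_x}+\varepsilon_w$ of $\Phi(\vec{x})_q$; the triangle inequality then yields the claimed $(4d\sqrt{\varepsilon_x}+\varepsilon_w+\delta)$-precision. The main obstacle I anticipate is not the circuit construction, which is inherited from \cref{theorem:cheb_qkan}, but the careful bookkeeping of the error propagation: one must verify that passing from the spectral-norm block-encoding guarantee to a single scalar diagonal entry loses nothing beyond $\varepsilon$, and that recasting the Hadamard-test expectation value into an amplitude-estimation-friendly success probability preserves the additive $\delta$ accuracy, so that the two error sources simply add.
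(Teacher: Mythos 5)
Your proposal is correct and follows essentially the same route as the paper: construct $U_{\Phi}$ via \cref{theorem:cheb_qkan} ($\mathcal{O}(d^2)$ queries), read off the $q$-th diagonal entry with the Hadamard test of \cref{figHadamardTest}, and boost to $\mathcal{O}(1/\delta)$ queries with amplitude estimation, composing the errors by the triangle inequality. The only step you flag but do not carry out --- converting the $Z$-expectation into an amplitude --- is handled in the paper by observing that the amplitude of $\ket{0}\ket{0}_{\mathrm{aux}}\ket{q}_k$ after the Hadamard test is $\tfrac{\alpha_q+1}{2}\ge 0$, so its absolute value (which is what amplitude estimation returns) determines $\alpha_q$.
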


Any potential quantum speed–up is contingent on the cost of reading out the $K$ entries produced by the final layer. First, the classical post-processing cost must remain sub-exponential. We therefore restrict the output dimension to $K = \mathcal{O}(\mathrm{polylog}(N))$, since estimating an exponential number of values would itself take exponential time. Fortunately, setting $K=\mathcal{O}(1)$ is already sufficient for most regression and classification tasks. Second, consider the precision with which each amplitude $\alpha_q = \frac{1}{N}\sum_{p=1}^{N}\phi_{pq}(x_p)$ is estimated. Using the estimation procedure described in \cref{theorem:amplitude_estimation_cheb_qkan}, an additive $\delta$ approximation requires $\mathcal{O}\left(d^{2}/\delta\right)$
queries, independent of $|\alpha_q|$. If a multiplicative (relative) error is required, i.e., $|\hat\alpha_q-\alpha_q|<\delta\,|\alpha_q|$, the query count increases to $\mathcal{O}\left(d^{2}/(\delta\,|\alpha_q|)\right)$, because the amplitude must now be resolved to a fixed fraction of its value. Consequently, the potential quantum speed-up is preserved as long as $\alpha_q$ does
not decay exponentially, that is, provided  
$|\alpha_q|^{-1} = \mathcal{O}(\mathrm{polylog}(N))$. The last requirement is not an artifact of QKAN; it is analogous to the inverse-amplitude overhead in amplitude-amplification/estimation, which scales as $1/\sqrt{a}$ with the marked-state probability $a$~\citep{Brassard_2002}, and black-box state-preparation, that scales as $1/\mathcal F$ with the $\ell_2$ filling fraction $\mathcal F$ of the target function \citep{mcardle2022quantum}.

\subsection{Application II: Multivariate state preparation}\label{sec:application_state_preparation}

On the other hand, given the quantum nature of our algorithm, QKAN can also output a quantum state as the solution. While QKAN can be seen as a machine learning model, the algorithm itself leads to a form of multivariate state preparation~\citep{rosenkranz2025quantum}. The resulting block-encoding can be applied to the uniform superposition $\ket{+}_k := H_k\ket{0}_k$, in combination with amplitude amplification~\citep{yoder2014fixed, Brassard_2002}, to produce a quantum state with amplitudes encoding multivariate functions of the input. In the following, we show how the compositional framework of QKAN introduced in \cref{sec:implementation} can be used to prepare quantum states encoding multivariate functions on a $D$-dimensional regular grid. Specifically, we work through the special case of a $D$-dimensional Gaussian. Efficient Gaussian state preparation has been studied extensively, ranging from Grover–Rudolph~\citep{grover2002creating} and Kitaev–Webb~\citep{kitaev2008wavefunction} to more recent approaches improving depth and ancilla costs~\citep{rattew2021efficient, kane2024nearly, kuklinski2025simpler}. Our use of the Gaussian example is not meant to compete with these tailored methods but to illustrate how QKAN’s modular, compositional structure enables the assembling of multivariate amplitudes from elementary components. Finally, in \cref{subsec:state_preparation_via_cheb_qkan}, we remove the grid restriction and show that any CHEB-QKAN layer -- acting on an arbitrary input register -- yields a valid multivariate state-preparation routine.

Our aim is to prepare a quantum state of the form $\sum_{i_1,\dots,i_D}f\bigl(x_{(i_1,\dots,i_D)}\bigr)\,\lvert i_1,\dots,i_D\rangle$ with $x_{(i_1,\dots,i_D)}=\bigl(-1 + i_j\,s\bigr)_{j\in [D]} \;\in[-1,1]^D$ and ${(i_1,\dots,i_D)\in\{0,\dots,2^n-1\}^D}$. To do this, we first encode the vectorized $D$-dimensional grid points $x_{(i_1,\dots,i_D)}$ as a diagonal operator $G_D$, treating the vectorized grid as the classical input $\vec{x}$ to QKAN. In Supplementary Note 4.A, we provide the proof of \cref{lemma:multivariate_grid} by extending the one-dimensional construction of \citet{rosenkranz2025quantum} to $D$ dimensions.

\begin{lemma}[Multivariate grid encoding]
\label{lemma:multivariate_grid}
Let $G_D \;=\;\diag\bigl(x_{(i_1,\dots,i_D)}\bigr)_{(i_1,\dots,i_D)\in\{0,\dots,2^n-1\}^D}$, where
\begin{equation}
    x_{(i_1,\dots,i_D)}
=\bigl(-1 + i_1\,s,\;-1 + i_2\,s,\;\dots,\;-1 + i_D\,s\bigr)
\;\in[-1,1]^D,
\end{equation}
be a uniform (vectorized) $D$-dimensional grid on $[-1,1]^D$ with step size $s = \tfrac{2}{2^n-1}$ in every direction. The dimension of $G_D$ is $D\,2^{nD}$. Then,
\begin{equation}
    G_1 = \sum_{i=1}^{n}\Bigl(\frac{2^{i-1}}{2^n-1}\Bigr) \, I_{i-1}\otimes XZX \otimes I_{n-i} \quad \text{and} \quad G_D \;=\; \sum_{j=1}^{D} I_{n(j-1)} \otimes G_1 \otimes I_{n(D-j)}\otimes \ketbra{j}{j},
\end{equation}
and we can create a $(1, D \lceil \log n\rceil,0)$-block-encoding of $G_D$ using $\mathcal{O}(Dn(\log n + \log D))$ two-qubit gates.
\end{lemma}

In the following, we illustrate how the QKAN architecture can be used to provide a multivariate Gaussian quantum state. The main result is summarized in \cref{theorem:multivariate_gaussian}:

\begin{theorem}
    \label{theorem:multivariate_gaussian}
    We can prepare a $Dn$-qubit quantum state $\ket{\psi}$ with amplitudes corresponding to a $D$-dimensional Gaussian distribution on a regular square grid of size $(2^n)^D$ such that 
    \begin{equation}
        \left\|\ket{\psi} \; - \; \frac{1}{\widetilde F_{\rm exp}}\sum_{i_1,\dots,i_D}^{2^n}\exp(-\tfrac{\beta}{2} \sum_{j=1}^D x_{i_j}^2)\,\ket{i_1,\dots,i_D} \right\|_2 \le \delta,
    \end{equation}
    where $\widetilde F_{\rm exp}$ normalizes the target state. The procedure succeeds with arbitrarily high probability by using $\mathcal{\widetilde O}\bigl( \beta^{\tfrac{D}{4}+\tfrac{1}{2}} \, n \, \log \tfrac{1}{\delta} \bigr)$ two-qubit gates and $D\lceil \log n\rceil + \lceil \log D\rceil + 4$ ancilla qubits.
\end{theorem}

Here, the $\mathcal{\widetilde O}$ notation suppresses the logarithmic factors $\log n$ and $\log \beta$, and treats $D$ as a constant. The details of gate and qubit complexity can be found in Supplementary Note 4.D. The proof of \cref{theorem:multivariate_gaussian} is an explicit four-step construction, and the entire procedure can be viewed as an instance of a two-layer QKAN architecture.  For example, ~\cref{fig:2d_gaussian} gives an illustration for the $D=2$ case: starting from a vectorized 2D grid of points, the first layer computes $x_i^2+y_j^2-1$, and the second layer applies a polynomial approximation of the exponential to produce the Gaussian output state.

\begin{figure}[H]
\centering
    \includegraphics[width=\linewidth]{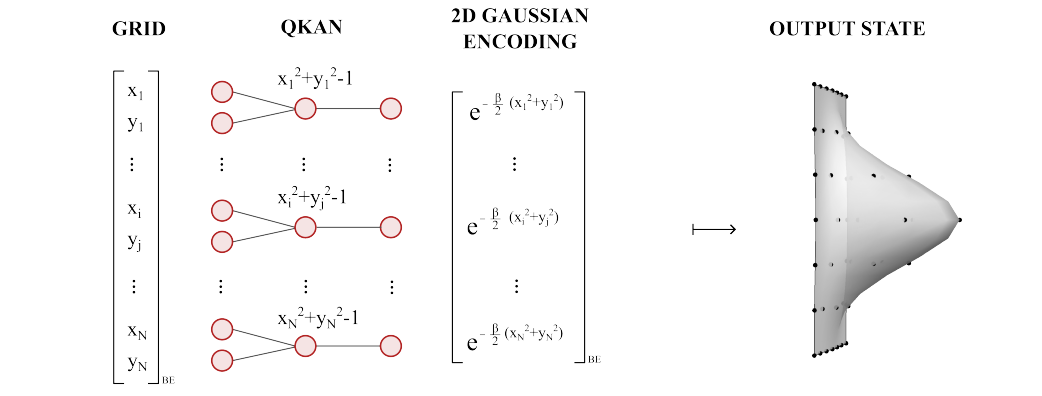}
    \caption{\textit{Example: 2D Gaussian state preparation via QKAN.} Starting from a vectorized 2D grid of points $\{(x_i,y_i)\}$ encoded as a diagonal block-encoding (left), the first QKAN layer applies Chebyshev polynomial $T_2$ and sums over the two dimensions, computing $\tfrac{1}{2}(T_2(x_i) + T_2(y_j)) = x_i^2+y_j^2-1$. A second layer uses a polynomial approximation of the exponential $e^{-\tfrac{\beta}{2}(x+1)}$ to block-encode the values $e^{-\tfrac{\beta}{2}(x_i^2+y_j^2)}$. Finally, applying this block-encoding to the uniform superposition and amplitude amplifying yields the desired 2D Gaussian distribution (right).}
    \label{fig:2d_gaussian}
\end{figure} 

The vectorized $D$-dimensional grid serves as the input to the first layer. By \cref{lemma:multivariate_grid}, we start by creating a $(1, D \lceil \log n\rceil,0)$-block-encoding $U_{G_D}$ of $G_D$ using $\mathcal{O}(Dn(\log n + \log D))$ two-qubit gates. The first layer applies the activation function $T_2(x) = 2x^2-1$ to all entries of the multivariate grid, followed by the summation over the $D$ dimensions, realizing the transformation $G_D\mapsto \Phi_1(G_D)$. Specifically, by \cref{theorem:qet_of_hermitian_matrices}, we first construct a $(1, D \lceil \log n\rceil+1,0)$-block-encoding of $T_2(G_D)$ according to the \textit{CHEB} step in the QKAN construction:
\begin{equation}
    G_D \mapsto T_2(G_D) \;=\; \sum_{j=1}^{D} I_{n(j-1)} \otimes T_2(G_1) \otimes I_{n(D-j)}\otimes \ketbra{j}{j}.
\end{equation}
This step uses $\mathcal{O}(1)$ queries to the controlled version of $U_{G_D}$ and $\mathcal{O}(D \log n)$ other two-qubit gates. We then sum over the $D$ dimensions by applying Hadamards on the last $k$ qubits of $T_2(G_D)$ and absorbing them into the auxiliary register, according to the \textit{SUM} step: 
\begin{equation}
    T_2(G_D) \mapsto \Phi_1(G_D) \; = \; \bigl(I_{Dn}\otimes \bra{0}_k H_k\bigr) T_2(G_D) \bigl(I_{Dn}\otimes H_k\ket{0}_k \bigr) \;=\; \tfrac{1}{D}\sum_{j=1}^{D} I_{n(j-1)} \otimes T_2(G_1) \otimes I_{n(D-j)}.
\end{equation}
The above transformation yields a $(1, D \lceil \log n\rceil + \lceil \log D\rceil + 1,0)$-block-encoding of $\Phi_1(G_D)$. In component-wise form, the transformation is $x_{(i_1,\dots,i_D)} \mapsto \tfrac{2}{D}\sum_{j=1}^D (-1 + i_j\,s)^2 -1$. The layer dimension is reduced from $\text{dim}\,G_D = D2^{nD}$ to $\text{dim}\,\Phi_1(G_D) = 2^{nD}$.  
The second layer implements the exponential decay:
\begin{equation}
    \Phi_1(G_D) \mapsto \Phi_2(G_D) \; \approx \; \exp \bigl(-\tilde\beta \bigl[\Phi_1(G_D) + 1\bigr] \bigr) \quad \text{with} \quad \tilde\beta \; = \; \tfrac{D}{4}\beta.
\end{equation}
In Supplementary Note 4.B we show that one can find an approximating polynomial $P_d(x)$ with degree $d = \mathcal{O}\bigl(\sqrt{D \beta}\log \tfrac{1}{\varepsilon} \bigr)$ such that $|P_d(x)-e^{-\tilde\beta(x+1)}|\le\varepsilon$ on $[-1,1]$. Such a polynomial can be realized by QSVT invoking \cref{theorem:qet_of_hermitian_matrices} using $\mathcal{O}\bigl(d)$ queries to the block-encoding of $\Phi_1(G_D)$, constructed in the previous step, and $\mathcal{O}\bigl(d \times (D \log n + \log D)\bigr)$ other two-qubit gates. By \cref{theorem:qet_of_hermitian_matrices}, we obtain a $(1, D \lceil \log n\rceil + \lceil \log D\rceil + 3,0)$-block-encoding of $\Phi_2(G_D)$. In component-wise form, the transformation is 
\begin{equation}
    \tfrac{2}{D}\sum_{j=1}^D (-1 + i_j\,s)^2 -1 \; \mapsto \; P_d\Bigl(\tfrac{2}{D}\sum_{j=1}^D (-1 + i_j\,s)^2 - 1\Bigr) \; \approx \; \exp \Bigl( -\tfrac{\beta}{2} \sum_{j=1}^D (-1 + i_j\,s)^2\Bigr).
\end{equation}
Therefore, the diagonal entries of $\Phi_2(G_D)$ correspond to the amplitudes of a $D$-dimensional Gaussian up to a maximal error $\varepsilon$. In Supplementary Note 4.D, we show that $d$ must be chosen as a function of $D$, $\beta$, and $\delta$ to obtain the target state preparation accuracy $\delta$: $d = \mathcal{O}\bigl(\sqrt{D\beta}\,\log \beta^{\tfrac{D}{4}} \log \tfrac{1}{\delta} \bigr)$. By applying the block-encoding to the uniform superposition $\ket{+}_{Dn}$ and post-selecting on the auxiliary register being in the $\ket{0}_a$ state, we prepare the desired state 
\begin{equation}
    \ket{\psi} = \frac{\Phi_2(G_D)\ket{+}_{Dn}}{\left\|\Phi_2(G_D)\ket{+}_{Dn}\right\|_2}
\end{equation}
probabilistically, with success probability $p = \left\|\Phi_2(G_D)\ket{+}_{Dn}\right\|_2^2$. We can boost $p$ to an arbitrarily high success probability by using fixed-point amplitude amplification with $\mathcal{O}(1/\sqrt{p})$ queries to the controlled block-encodings of $\Phi_2(G_D)$ (and their adjoint versions)~\citep{yoder2014fixed}. In Supplementary Note 4.C we show that $1/\sqrt{p}$ can be upper-bounded by $\mathcal{\widetilde O}(\beta^{\tfrac{D}{4}})$ using a lower bound on the continuous version of the $\ell_2$-filling fraction of the $D$-dimensional Gaussian. The total gate complexity arises from $\mathcal{\widetilde O}(\beta^{\tfrac{D}{4}})$ queries to the (controlled) block-encodings of $\Phi_2(G_D)$, as shown in Supplementary Note 4.D.

This compositional approach readily extends beyond Gaussian amplitudes to a broader class of multivariate functions. Any target map
\begin{equation}
    f(x_{(i_1,\dots,i_D)}) \;=\; g_L\bigl(g_{L-1}(\cdots g_1(x_{(i_1,\dots,i_D)}))\bigr),
\end{equation}
with each $g_i$ admitting an efficient polynomial approximation of degree $d_i$, can be implemented by cascading $L$ QKAN layers. This modularity allows one to leverage known polynomial expansions for elementary functions, such as $\sin(x)$, $\exp(x)$, or $\log(x)$, and assemble them into more complex amplitudes via successive QKAN layers. Crucially, because QKAN composes recursively by invoking the block-encoding constructed in the previous layer as the elementary building block for the next one, the overall two-qubit-gate cost scales multiplicatively as $\mathcal{O}\bigl(d_1\,d_2\cdots d_L\bigr)$, making it essential to keep the compositional depth $L$ shallow. 

Finally, in addition to the fully explicit state-preparation constructions presented above, any intermediate QKAN layer---when viewed as a parametrized quantum learning model with trainable activation functions---can itself produce a parametrized quantum state when applied to a uniform superposition; this provides a variational multivariate state-preparation method for generic functions, as formalized in~\cref{subsec:state_preparation_via_cheb_qkan}.

\section{Discussion}\label{sec:discussion}
In this work, we have defined and implemented a quantum version of the recently proposed KAN architecture in Ref.~\citep{liu2025kan}. Our proposed QKAN architecture is built on block-encodings, where both the input and the output are block-encodings. More specifically, it employs a recursive construction where block-encodings obtained in the previous layer serve as a primitive building block in the next layer. The potential applications of QKAN will be reliant on the availability of efficient block-encodings of the inputs. Specifically, we demonstrated that QKAN can serve as a quantum learning model by giving an explicit construction for encoding and training its parameters. Moreover, we demonstrated that QKAN has a broader algorithmic utility by serving as a multivariate state preparation protocol, exemplified by an explicit construction of a $D$-dimensional Gaussian distribution.

The QKAN architecture has several strengths. Firstly, it depends linearly on the cost of constructing block-encodings of input and weights. This dependency can lead to efficient implementation procedures assuming efficient block-encoding methods. Additionally, we propose that QKAN is potentially suitable for direct quantum input, for instance, quantum states whose analysis would be intractable classically. For example, in phase classification tasks~\citep{carrasquilla2017machine}, if a state corresponding to an unknown quantum phase of a physical system can be prepared efficiently, one may attempt to train QKAN in a supervised manner to classify the phase of that state. This classification would be achieved by computing a multivariate function of the state's amplitudes, potentially leading to the discovery of new order parameters. In addition, QKAN is a versatile architecture that admits different ways to encode data, parameterize weights, and perform training. Lastly, its underlying mechanism, the QSVT framework, allows the implementation of different sets of basis functions tailored to different applications.

On the other hand, QKAN exhibits several caveats. Firstly, it inherits limitations from the classical KANs, whose full potential remains to be established, even though symbolic regression tasks in science applications are brought forth. Secondly, the query complexity of a multilayer QKAN scales exponentially in the number of layers, limiting QKAN to a shallow, albeit wide architecture. Finally, some caveats arise due to the nature of the quantum subroutines involved. Quantum computers are good at representing polynomials, but not all functions can be efficiently approximated by polynomials. Therefore, the available basis functions for QKAN may generally be less powerful in approximating arbitrary functions directly compared to, for example, using splines as basis functions.

Within the broader quantum machine learning literature and even quantum algorithms, QKAN is a novel result in multiple directions. Firstly, it departs from variational architectures and steps into the more powerful quantum linear algebra toolset. From another perspective, it brings the aspect of parameterization into fault-tolerant subroutines for quantum machine learning. Secondly, it is a quantum learning model built on a new paradigm --- that of having a decomposition of a function into single transformed features and summations. Finally, QKAN enables a version of multivariate state preparation, thereby serving as an algorithmic subroutine. We hope that this work will serve as a motivation to further investigate our QKAN architecture and other types of QKAN architecture, and build quantum models beyond near-term techniques.

\section{Methods}\label{sec:methods}

\subsection{Block-encoding and quantum subroutines}
\label{secQuantumPrelim}

In the following, we outline some known results that are used in different parts of the construction and parameterization of QKAN. We begin by formally defining block-encoding:
\begin{definition}[Block-encoding -- Definition 24,~\citep{gilyen2019quantum}, see also~\citep{low2019hamiltonian, chakraborty2019power}]\label{Definition:Block_encoding}
    Let $A$ be an $n$-qubit matrix, $\alpha, \varepsilon \in \mathbb{R}_+$ and $a \in \mathbb{N}$. We say that the $(n+a)$-qubit unitary $U$ is an $(\alpha,a,\varepsilon)$-block-encoding of $A$ if
	\begin{equation}
		\|A-\alpha(\bra{0}_a\otimes I_n)U(\ket{0}_a\otimes I_n)\| \leq \varepsilon.
	\end{equation}
\end{definition}

Given block-encodings of operators $A_i$, we can construct a block-encoding of their linear combination using an auxiliary tool known as a ``state-preparation pair''. Recall that $\|\cdot\|_1$ is the $\ell_1$/Manhattan norm.

\begin{definition}[State preparation pair -- Definition 28,~\citep{gilyen2019quantum}]
    \label{Definition:State_preparation_pair}
    Let $\vec{y} \in \mathbb{C}^m$ and $\|\vec{y}\|_1 \leq \beta$. The pair of unitaries $(P_L,P_R)$ is called a ($\beta,b,\varepsilon_{\mathrm{SP}}$)-state-preparation-pair for $\vec{y}$ if
    \begin{align}
        P_L\ket{0^b} = \sum_{j=1}^{2^b} c_j\ket{j}, \quad
        P_R\ket{0^b} = \sum_{j=1}^{2^b} d_j\ket{j},
    \end{align}
    such that $\sum_{j=1}^{m} |y_j-\beta c_j^*d_j| \leq \varepsilon_{\mathrm{SP}}$ and $c_j^*d_j = 0$ for $j=m+1,\dots,2^b$.
\end{definition}
One can think of a state preparation pair as encoding the desired state/vector $\vec{y}$ in the first $m$ elements of a length-$2^b$ column vector whose elements are $c_j^*d_j$, up to an error of $\varepsilon_{\mathrm{SP}}$. The role of $\beta$ is to take care of normalization.

\begin{lemma}[Linear combination of block-encodings -- Lemma 29,~\citep{gilyen2019quantum}]
    \label{lemma:GSLW_linear_combination_BE}
    Let $A = \sum_{j=1}^m y_jA_j$ be an $n$-qubit operator and $\varepsilon \in \mathbb{R}^+$. Suppose that $(P_L, P_R)$ is a $(\beta, b, \varepsilon_1)$-state-preparation-pair for $\vec{y}$ and 
    \begin{equation}
        W = \sum_{j=1}^{m} |j\rangle\langle j| \otimes U_j + ((I - \sum_{j=1}^{m} |j\rangle\langle j|) \otimes I_a \otimes I_n)
    \end{equation}
    is an $n + a + b$ qubit unitary such that for all $j \in {1, \dots, m}$ we have that $U_j$ is an $(\alpha, a, \varepsilon_2)$-block-encoding of $A_j$. Then we can implement a $(\alpha\beta, a + b, \alpha\varepsilon_1 + \beta\varepsilon_2)$-block-encoding of $A$, with a single use of $W$, $P_R$ and $P_L^\dagger$.
\end{lemma}

We can also construct a block-encoding of a product of two block-encoded matrices.
\begin{lemma}[Product of block-encodings -- Lemma 30,~\citep{gilyen2019quantum}]
    \label{lemma:product_BE}
    Let $U_A$ be a $(\alpha,a,\varepsilon_A)$-block-encoding of $A$ and $U_B$ be a $(\beta,b,\varepsilon_B)$-block-encoding of $B$, where $A,B$ are $n$-qubit operators. Then, $(I_b \otimes U_A)(I_a \otimes U_B)$ is a $(\alpha\beta,a+b,\alpha\varepsilon_B+\beta\varepsilon_A)$-block-encoding of $AB$.
\end{lemma}

In~\cref{lemma:product_BE}, identity operators act on each other’s auxiliary qubits, slightly abusing the notation. Formally, $I_b \otimes U_A := I_b \otimes \sum_{i,j=1}^{2^a} \ket{i}\bra{j} \otimes (U_A)_{ij}$, while $I_a \otimes U_B := \sum_{i,j=1}^{2^b} \ket{i}\bra{j} \otimes I_a \otimes (U_B)_{ij}$.

\begin{lemma}[Hadamard product of block-encodings -- Theorem 4,~\citep{guo_quantum_2024}]
    Let $U_A$ be a $(\alpha,a,\varepsilon_A)$-block-encoding of $A$ and $U_B$ be a $(\beta,b,\varepsilon_B)$-block-encoding of $B$, where $A,B$ are $n$-qubit operators. Then
    \begin{equation}
        \left( P \otimes I_{a+b} \right) U_A \otimes U_B \left( P^\dagger \otimes I_{a+b} \right)
    \end{equation}
    is a $(\alpha\beta,a+b+n,\alpha \varepsilon_B + \beta \varepsilon_A)$-block-encoding of $A \circ B$. Here $U_A$ and $U_B$ are the block-encodings of $A$ and $B$ respectively and $P := \sum_{i,j=1}^{N} \ket{i}\bra{i} \otimes \ket{i \oplus j}\bra{j}$ can be constructed using $n$ CNOT gates, namely one CNOT gate between each pair of corresponding qubits from the first and second registers.
    \label{lemma:hadamard_product}
\end{lemma}

Polynomial functions can be applied to the singular values of a block-encoded matrix (or eigenvalues of a Hermitian matrix) through quantum signal processing (QSP)~\citep{low2017optimal, low2019hamiltonian} and QSVT~\citep{gilyen2019quantum}. For a more detailed account of QSVT and its applications in quantum algorithms, we refer the reader to Refs.~\citep{gilyen2019quantum, martyn2021grand,dalzell2023quantum}.

\begin{lemma}[Constructing Chebyshev polynomials via QSP -- Lemma 6,~\citep{gilyen2019quantum}] 
    \label{lemma:chebyshev_qsp}
Let $T_d\in\mathbb{R}[x]$ be the $d$-th Chebyshev polynomial of the first kind. Let $\Phi\in\mathbb{R}^d$ be such that $\phi_1=(1-d)\frac\pi2$, and for all $i\in[d]\setminus\{1\}$, let $\phi_{i}:=\frac\pi2$. Then 
	\begin{equation}
 \prod_{j=1}^{d}\left(e^{i\phi_j\sigma_z}R(x)\right)
	=\left[\begin{array}{cc} T_d(x) & . \\ . & . \end{array}\right],\text{ where }R(x) := \begin{bmatrix} x & \sqrt{1-x^2} \\ \sqrt{1-x^2} & -x \end{bmatrix},
	\end{equation}
 is a $(1,1,0)$-block-encoding of $T_d(x)$.
\end{lemma}

\begin{theorem}[Polynomial Eigenvalue Transformation -- Theorem 31,~\citep{gilyen2019quantum}]\label{theorem:qet_of_hermitian_matrices}
Let $U$ be an $(\alpha,a,\varepsilon)$-encoding of a Hermitian matrix $A$ and $P \in \mathbb{R}[x]$ be a degree-$d$ polynomial satisfying $|P(x)| \leq \frac{1}{2}$ on $[-1,1]$.
Then, one can construct a quantum circuit $\tilde{U}$ which is a $(1,a+2,4d\sqrt{\varepsilon/\alpha})$-encoding of $P(A/\alpha)$. $\tilde{U}$ consists of $d$ $U$ and $U^\dag$ gates, one controlled-$U$, and $\mathcal{O}((a+1)d)$ other one- and two-qubit gates.
\end{theorem}
Note that for Chebyshev polynomials, the $a+1$ auxiliary qubits are sufficient, and the $|P(x)| \leq \frac{1}{2}$ constraint is relaxed since $|T_d(x)| \le 1$.

\subsection{CHEB-QKAN construction}\label{subsection:qkan_construction}

In the following, we present a detailed construction of CHEB-QKAN. Our model takes as input a real $N$-dimensional vector $\vec{x} = (x_1, x_2, \dots, x_{N}) \in [-1,1]^N$. We encode and process this input data on the diagonal of a matrix, assuming it is provided as a diagonal $(1, a_x, \varepsilon_x)$-block-encoding $U_x$, such that $\|\langle 0|_a U_x |0\rangle_a - \diag(x_1, \dots, x_{N})\| \le \varepsilon_x$ \footnote{Note that if $U$ is an $(\alpha,a,\varepsilon)$-block-encoding of $A$, then equivalently it is a $(1,a,\frac{\varepsilon}{\alpha})$-block-encoding of $\frac{A}{\alpha}$. Therefore, we can factor $\alpha$ into the input vector $\vec{x}$ and restrict the discussion to $\alpha = 1$.}. For example, the input could be the amplitudes of a quantum state: given access to an amplitude-encoding unitary $U$ where $U|0\rangle_n = \sum_{i=1}^{N} x_i |i\rangle_n$ with $x_i \in [-1,1]$, we can efficiently construct a diagonal $(1, n+2, 0)$-block-encoding of $\{x_i\}$, as detailed in~\cref{lemma:diagonal_block_encoding}. Alternatively, techniques for constructing exact block-encodings for sparse matrices may also be employed~\citep{camps2024explicit}. Since such constructions may not be diagonal, one first removes the off-diagonal entries of the constructed block-encoding via \cref{lemma:removal_of_non_diagonal}. In the following, we detail the steps to build a CHEB-QKAN layer.

Suppose that the layer has $K$ output nodes. Then, we need $NK$ different parametrized activation functions, one between every two input and output nodes. To accommodate this, we first dilate the input block-encoding by appending $k = \log_2 K$ auxiliary qubits (Supplementary Note 1). This produces a $(1, a_x, \varepsilon_x)$-block-encoding of 
\begin{equation}
    \diag(\underbrace{x_1, \dots, x_1}_{K}, \dots, \underbrace{x_{N}, \dots, x_{N}}_{K}) = \sum_{p=1}^N\sum_{q=1}^K x_p\ketbra{p}{p}_n\otimes\ketbra{q}{q}_k.
\end{equation}
The obtained block-encoding $U_x\otimes I_k$ serves as the foundation for subsequent operations.
\begin{figure}[H]
    \centering
    \includegraphics{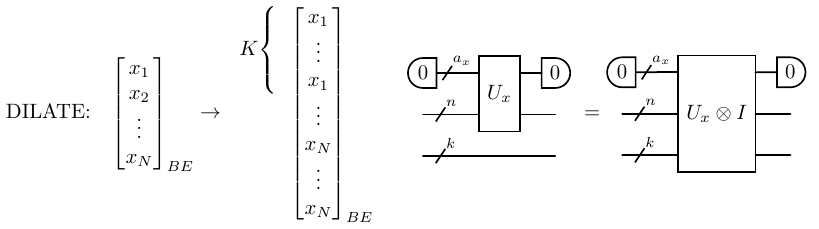}
    \caption{\textit{Step 1.} Expand the input block-encoding of the $N$-dimensional input vector $\vec{x}$ by appending $k = \log_2 K$ auxiliary qubits, resulting in a block-encoding containing $K$ copies of each vector component.}
    \label{fig:DILATE}
\end{figure}

Our trainable activation functions are linear combinations of Chebyshev polynomials of the first kind because these polynomials can be natively realized using QSVT~\citep{low2019hamiltonian}. To implement these polynomials, we alternately apply $U_{x}$ and $U_{x}^{\dagger}$, interleaved by reflection operators according to \cref{lemma:chebyshev_qsp}. By utilizing \cref{theorem:qet_of_hermitian_matrices}, we obtain a $(1, a_x+1, 4r\sqrt{\varepsilon_x})$-block-encoding of the diagonal matrix
\begin{equation}
    \diag(\underbrace{T_r(x_1), \dots, T_r(x_1)}_{K}, \dots, \underbrace{T_r(x_N), \dots, T_r(x_N)}_{K}) = \sum_{p=1}^N\sum_{q=1}^K T_r(x_p)\ketbra{p}{p}_n\otimes\ketbra{q}{q}_k,
\end{equation}
using $\mathcal{O}(r)$ applications of $U_{x}$ and $U_{x}^\dagger$ and a single auxiliary qubit. We denote this resulting block-encoding by $U_{T_r}$, which serves as the building block for our activation functions.
\begin{figure}[H]
    \centering
    \includegraphics[width=\textwidth]{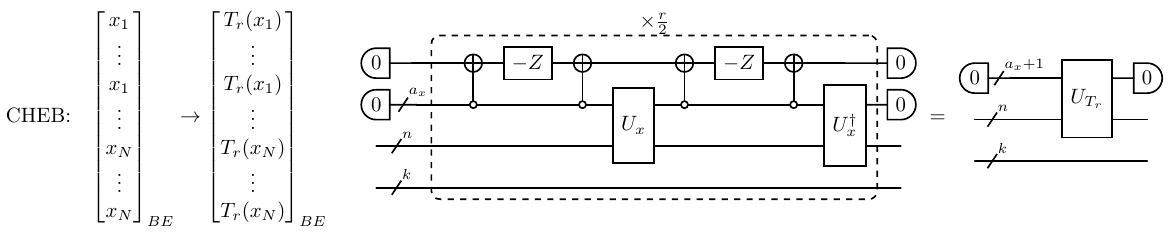}
        \caption{\textit{Step 2.} Apply Chebyshev polynomials of degrees $r \in \{1,\dots,d\}$ to the diagonal block-encoding from Step 1 by interleaving the input block-encoding $U_x$ and its adjoint $U_x^\dagger$ with reflection operators $Z_{\pi} := I\otimes(2\ketbra{0}{0}_{a_x}-I_{a_x})$. For even $r$, apply $(U_x^{\dagger} Z_{\pi} U_x Z_{\pi})^{\frac{r}{2}}$; for odd $r$, apply $U_x Z_{\pi}(U_x^{\dagger} Z_{\pi} U_x Z_{\pi})^{\lfloor\frac{r}{2}\rfloor}$~\citep{lin2022lecturenotesquantumalgorithms}. The $k$ auxiliary qubits from Step 1 remain unused in this construction, serving only to maintain the expanded dimension of the block-encoding.}
    \label{fig:CHEB}
\end{figure}

We repeat the previous to prepare separate block-encodings for each of the $d+1$ Chebyshev polynomials, with degrees $r$ ranging from 0 to $d$. Note that $T_0(x) = 1$ is trivial and corresponds to identity. Therefore, a general linear combination requires a total of $(d+1)NK$ linear coefficients. For this construction, we assume the weights are provided in the form of $d+1$ diagonal $(1, a_w, \varepsilon_w)$-block-encodings of $NK$-dimensional real weight vectors $\vec{w_r}$, denoted by $U_{w_r}$. We intentionally defer the discussion of the precise construction of these block-encodings to parametrization \cref{subsec:parametrization}, in order to maintain the generality of our approach. Applying \cref{lemma:product_BE}, we multiply each Chebyshev polynomial encoding by its respective weight block-encoding. This yields a $(1, a_x + 1 + a_w, 4r\sqrt{\varepsilon_x} + \varepsilon_w)$-block-encoding for each diagonal matrix:
\begin{equation}
    \diag(\underbrace{w_{11}^{(r)} T_r(x_1), \dots, w_{1K}^{(r)} T_r(x_1)}_{K}, \dots, \underbrace{w_{N1}^{(r)} T_r(x_N), \dots, w_{NK}^{(r)} T_r(x_N)}_{K}) = \sum_{p=1}^N\sum_{q=1}^K w_{pq}^{(r)} T_r(x_p)\ketbra{p}{p}_n\otimes\ketbra{q}{q}_k.
\end{equation}
Here, $w_{pq}^{(r)}$ is the weight of the $r$-th Chebyshev polynomial in the activation function between the $p$-th input node and the $q$-th output node. We must have $|w_{pq}^{(r)}| \le 1$ for all $p,q,r$ due to unitarity.

\begin{figure}[H]
    \centering
    \includegraphics[width=\textwidth]{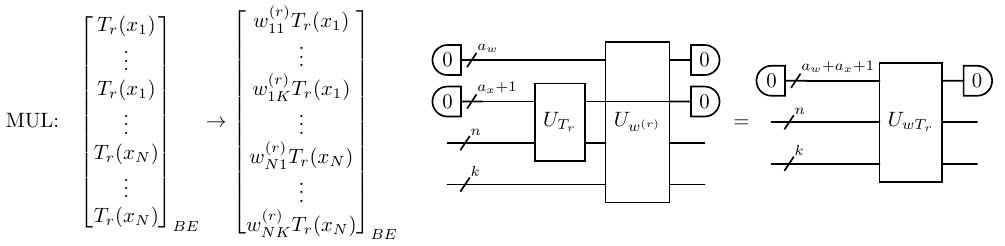}
    \caption{\textit{Step 3.} Multiply the block-encoded Chebyshev polynomials from Step 2 by an $NK$-dimensional weight vector $\vec{w^{(r)}}$. The weight $w_{pq}^{(r)}$ corresponds to the coefficient in front of the $r$-th Chebyshev polynomial in the activation function $\phi_{pq}$. In the circuit, the respective block-encodings do not "overlap" on their auxiliary qubits and, therefore, the $a_x +1$ wire goes ''through" $U_{w^{(r)}}$.}
    \label{fig:MUL}
\end{figure}

Finally, we combine the $d+1$ weighted block-encodings of Chebyshev polynomials by taking the linear combination of block-encodings with an equal superposition using \cref{lemma:GSLW_linear_combination_BE}. This process yields the desired $(1, a_x + 1 + a_w + \log_2(d+1), 4d\sqrt{\varepsilon_x} + \varepsilon_w)$-block-encoding of the diagonal matrix containing $NK$ activation functions:
\begin{equation}
    \diag(\underbrace{\phi_{11}(x_1), \dots, \phi_{1K}(x_1)}_{K}, \dots, \underbrace{\phi_{N1}(x_N), \dots, \phi_{NK}(x_N)}_{K}) = \sum_{p=1}^N\sum_{q=1}^K \phi_{pq}(x_p)\ketbra{p}{p}_n\otimes\ketbra{q}{q}_k.
\end{equation}
Here, $\phi_{pq}(x)$ denotes the activation function between the $p$-th input node and the $q$-th output node:
\begin{equation}
    \phi_{pq}(x) := \frac{1}{d+1} \sum_{r=0}^d w_{pq}^{(r)} T_r(x_p)
\end{equation}
The LCU procedure requires $\log_2(d+1)$ auxiliary qubits, assuming $d+1$ is a power of two, with the equal superposition created by applying $H_{\log_2(d+1)}$. In \cref{subsec:parametrization}, we consider generalizing this step by replacing $H_{\log_2(d+1)}$ with a parametrized unitary to control the contribution of each basis function. We denote the obtained block-encoding by $U_{\phi}$.
\begin{figure}[H]
    \centering
    \includegraphics[width=\textwidth]{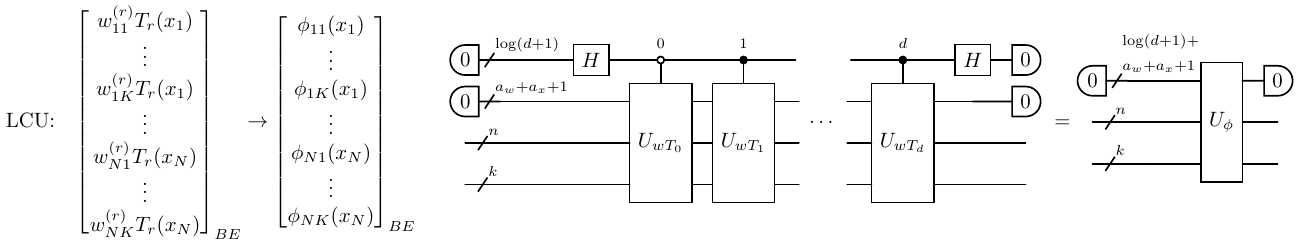}
    \caption{\textit{Step 4.} Add the $(d+1)$ block-encodings of weighted Chebyshev polynomials obtained in Step 3 using LCU. This requires $\log_2(d+1)$ control qubits, which are initialized in an equal superposition state using multi-qubit Hadamard gates.}
    \label{fig:LCU}
\end{figure}

In the final step of the construction, we want to produce a diagonal block-encoding corresponding to the output of the CHEB-QKAN layer. Starting from the block-encoding $U_{\phi}$ of individual activation functions obtained in the previous step, we apply a layer of Hadamards $H_n$ on the $n$ `input' qubits, corresponding to the summation of $N$ inputs for each of the $K$ output nodes:
\begin{equation}
    \left(H_n\otimes I_k\right) \left( \sum_{p=1}^N\sum_{q=1}^K \phi_{pq}(x_p)\ketbra{p}{p}_n\otimes\ketbra{q}{q}_k \right)\left(H_n\otimes I_k\right).
\end{equation}
This results in a block-encoding of a matrix whose diagonal elements hold the desired summation:
\begin{equation}
    \bra{0}_n\otimes I_k\left( \sum_{p=1}^N\sum_{q=1}^K \phi_{pq}(x_p)H_n\ketbra{p}{p}_n H_n\otimes \ketbra{q}{q}_k \right) \ket{0}_n\otimes I_k = \sum_{q=1}^K \left( \frac{1}{N}\sum_{p=1}^N \phi_{pq}(x_p) \right) \ketbra{q}{q}_k.
\end{equation}
In particular, the obtained unitary, denoted by $U_{\Phi}$, is a $(1, a_x + 1 + a_w + \log_2(d+1) + n, 4d\sqrt{\varepsilon_x} + \varepsilon_w)$ block-encoding of the diagonal matrix 
\begin{equation}
    \diag\left( \frac{1}{N}\sum_{p=1}^{N}\phi_{p1}(x_p), \dots, \frac{1}{N}\sum_{p=1}^{N}\phi_{pK}(x_p)\right),
\end{equation}
since
\begin{align}
    &\| \bra{0}_n\left(H_n\otimes I_k\right) \diag(\phi_{11},\dots,\phi_{NK})  \left(H_n\otimes I_k\right)\ket{0}_n - \bra{0}_n\left(H_n\otimes I_k\right) \bra{0}_{\mathrm{aux}} U_{\phi} \ket{0}_{\mathrm{aux}} \left(H_n\otimes I_k\right)\ket{0}_n\| \nonumber\\
    =& \left\| \diag\left( \frac{1}{N}\sum_{p=1}^{N}\phi_{p1}(x_p), \dots, \frac{1}{N}\sum_{p=1}^{N}\phi_{pK}(x_p)\right) - \frac{1}{N}\sum_{p=1}^{N} \left(\bra{p}_n\otimes I_k\right) \bra{0}_{\mathrm{aux}} U_{\phi} \ket{0}_{\mathrm{aux}}\left(\ket{p}_n\otimes I_k\right)\right\|\nonumber\\
    \le &\max_{p,q}|\phi_{pq}(x_p)- \bra{p}_n\bra{q}_k\bra{0}_{\mathrm{aux}} U_{\phi} \ket{0}_{\mathrm{aux}}\ket{p}_n\ket{q}_k| \nonumber\\
    \le &4d\sqrt{\varepsilon_x} + \varepsilon_w.
\end{align}

Notice that we have absorbed the $n$ ``input'' qubits into the auxiliary register.
\begin{figure}[H]
    \centering
    \includegraphics[width=\textwidth]{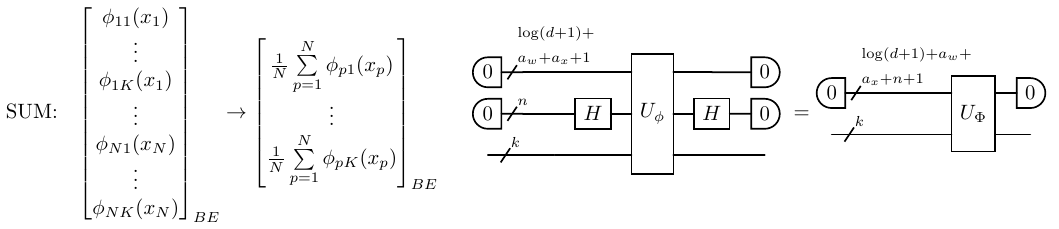}
    \caption{\textit{Step 5.} Sum the individual activation functions over $N$ input nodes for each output node, creating the desired diagonal block-encoding $U_{\Phi}$ of the $K$-dimensional output vector $\Phi(\vec{x})$. This is achieved by sandwiching the block-encoding from Step 4 with two $n$-qubit Hadamard gates. The dimension reduction occurs as the $n$ qubits originally used for input block-encoding are moved to the auxiliary register.}
    \label{fig:SUM}
\end{figure}

\subsection{Input block-encoding for the learning model}\label{subsec:input_block_encoding}

We formalize the two proposed methods for efficient diagonal block-encoding of the input vector. 

\begin{lemma}[Diagonal block-encoding of amplitudes -- Theorem 2,~\citep{rattew2023nonlineartransformationsquantumamplitudes}]
    Given an $n$-qubit quantum state specified by a state preparation unitary $U$, such that $|\psi\rangle_n = U|0\rangle_n = \sum_{j=1}^{N} \psi_j|j\rangle_n$ (with $\psi_j \in \mathbb{C}$), we can prepare a $(1, n+2, 0)$-block-encoding $U_A$ of the diagonal matrix $A = \diag(\re(\psi_1), \dots, \re(\psi_{N}))$ with $O(n)$ circuit depth and a total of $O(1)$ queries to a controlled-$U$ gate.
    \label{lemma:diagonal_block_encoding}
\end{lemma}

\begin{lemma}[Removing off-diagonal matrix elements]
    Let $U_A$ be an $(\alpha,a,\varepsilon)$-block-encoding for an $n$-qubit operator $A$. Then 
    \begin{equation}
        \left(I_{a} \otimes P\right) U_A \otimes I_n \left(I_{a} \otimes P\right)
    \end{equation}
    is a $(\alpha,a+n,\varepsilon)$-block-encoding of $\diag(A_{11},\dots,A_{NN})$, where $P = \sum_{i,j=1}^{N} \ket{i}\bra{i} \otimes \ket{i \oplus j}\bra{j}$.
    \label{lemma:removal_of_non_diagonal}
\end{lemma}

\begin{proof}
    Noting that $\diag(A_{11},\dots,A_{NN}) = A \circ I_n$, this follows immediately from~\cref{lemma:hadamard_product}, where $A \leftarrow A$ and $B \leftarrow I_n$. Note that $I_a$ acts on the $a$ auxiliary qubits of $U_A$ and that, trivially, $I_n$ is a $(1,0,0)$-block-encoding of itself. 
\end{proof}

\subsection{Parametrization of the QKAN learning model}\label{subsec:parametrization}
The parameterization of QKAN depends on being able to encode the relevant parameters into a diagonal block-encoding and update it. Rather than classically computing and reconstructing the block-encoding at each iteration -- which can be costly or technically challenging~\citep{camps2024explicit} -- we combine parametrized quantum circuits (PQCs)~\citep{benedetti2019parameterized} with our block-encoding propositions in \cref{subsec:input_block_encoding} to enable efficient updates in the \textit{MUL} step of \cref{subsection:qkan_construction}. We first provide two methods to parameterize the diagonal block-encodings of the weight vectors $\vec{w}$ used in the \textit{MUL} step in \cref{subsection:qkan_construction}, which correspond to different norm constraints, namely $\| \vec{w} \|_1 \le 1$, $\| \vec{w} \|_2 \le 1$, and $\| \vec{w} \|_\infty \le 1$.

The first method encodes the real parts of the amplitudes of a parametrized state
\begin{equation}
	\ket{w(\vec\theta)}_{n+k} = U(\vec\theta)\ket{0}_{n+k}
\end{equation}
as a diagonal $(1, n+k+2, 0)$-block-encoding via \cref{lemma:diagonal_block_encoding}. The step can be done efficiently using $O(1)$ queries to the controlled-$U(\vec\theta)$ gate and its adjoint version. The $\ell_2$ normalization the pure quantum state $\ket{w(\vec\theta)}_{n+k}$ implies a strict equality constraint $\| \vec{w} \|_2 = 1$. However, by encoding only the real parts by \cref{lemma:diagonal_block_encoding}, the constraint on the weights is relaxed to an inequality $\| \re(\vec{w}) \|_2 \le 1$. Such regularization is utilized in classical machine learning to prevent overfitting~\citep{krogh1991simple,ng2004feature}, but on the other hand, it may limit the expressibility of the QKAN model, which can be hypothesized from the effects of $\ell_2$ upper bounds of weights on Rademacher complexity~\citep{bartlett2003rademacher}. 

Alternatively, we can impose an $\ell_1$ regularization by encoding the real values of the state onto the diagonal as described above, and then squaring it by applying the block-encoding twice (\cref{lemma:product_BE}) to create a block-encoding with $\|\vec{w}^2\|_1\le 1$. This, however, places the weight within a probability simplex since $w_j^2 \ge 0$. To generalize the parameterization from the simplex to the $\ell_1$ ball, we can multiply the simplex-constrained block-encoding by an alternatively parametrized block-encoding with $\{1, -1\}$ on its diagonal, which can be prepared by applying the QSVT-approximated sign function to another parametrized block-encoding. We note that there is a limitation to the precision of the approximation of the sign function when the parameters are close to 0, and we leave further improvements on such parameterizations to future work.

The second method is to utilize \cref{lemma:hadamard_product} and replace \textit{MUL} with a Hadamard product between a parametrized unitary (or parametrized block-encoding) and our prepared diagonal block-encoding $U_{T_r}$ of the $r$-th Chebyshev polynomial. A parametrized unitary $U(\vec{\theta})$ is a $(1, 0, 0)$-block-encoding of itself. As the off-diagonal elements of the diagonal block-encoding $U_{T_r}$ are zero, so is the Hadamard product 
\begin{equation}
    U(\vec\theta) \circ \bra{0}_{\mathrm{aux}}U_{T_r}\ket{0}_{\mathrm{aux}},
\end{equation}
so the result remains a diagonal block-encoding. We then effectively create a parametrized diagonal. We conjecture that the diagonal block-encoding produced by this second method can be more expressive compared to the first method because, in principle, all diagonal entries of $U(\vec\theta)$ could be in the range of $[-1, 1]$ such that the $\ell_\infty$ norm of the weights is upper bounded by $1$: $\| \vec{w} \|_\infty \le 1$. This provides a much weaker regularization effect on the weights compared to the $\ell_2$ and $\ell_1$ regularizations in the first method, which would result in higher expressibility while retaining a regularization effect.

Given that, for the purpose of this paper, we assume that all the weights in our QKAN are real, we must restrict the diagonal entries of $U(\vec\theta)$ to be real. In principle, one can limit the use of PQCs to consist of only real gates, e.g., RY, CNOT, and CZ gates. However, such limitations would greatly restrict the expressive power of PQCs as they would be confined to orthogonal transformations. Alternatively, one can prepare the adjoint $U(\vec\theta)^\dagger$ of the PQC and combine the two using LCU, so that the diagonal entries of the resulting matrix, $\frac{1}{2}(U(\vec\theta)^\dagger + U(\vec\theta))$, are real. Finally, the parameterization can be further augmented by composing the constructions outlined above, for example, by taking sums, products, or applying polynomial functions to the parametrized diagonal block-encoding as discussed in \cref{secQuantumPrelim}. 

Thus far, we have limited the parameterization of QKAN to the weight vectors. However, parameterizing other steps of QKAN could be advantageous. For instance, in the \textit{LCU} step in \cref{subsection:qkan_construction}, instead of using Hadamard transforms, we can parameterize the unitaries used as state preparation pairs (\cref{Definition:State_preparation_pair}) to control the global weights of Chebyshev polynomials of different degrees. With a sufficiently expressive Ansatz, a training strategy could involve iteratively adding higher-degree Chebyshev polynomials to the sum and optimizing their global weights. By inspecting the optimized weights, one could determine the optimal number of Chebyshev polynomials, e.g., if the weight of the newly added polynomial vanishes. Furthermore, while we focus on Chebyshev polynomials, the QSVT framework allows implementing any bounded polynomial using \cref{theorem:qet_of_hermitian_matrices}, and many functions of interest can be well approximated by polynomials in the QSVT framework~\citep{gilyen2019quantum}. This, of course, includes splines, which can be approximated via a polynomial approximation of the $\erf$ function, as mentioned in \cref{sec:QKAN}. While a fixed basis can be selected, in principle, even the angles of QSVT could be made trainable parameters, which would allow the selection of basis functions to be part of learning as well.

\subsection{Training of the QKAN learning model}\label{subsec:train}
The parameterization methods in \cref{subsec:parametrization} remove the cost of constructing a new circuit that implements a diagonal block-encoding given that we only need to update the parameters in the PQC. Considering the full circuit with the Hadamard test and amplitude estimation, we can view the entire circuit as a very large PQC that has parameters that are repeated multiple times throughout the circuit. Given this construction, we note that the same parameter would be repeated throughout the circuit due to sequential repetitions of circuit blocks from QSVT and amplitude estimation. To achieve the analytical derivative from parameter-shift rules~\citep{mitarai2018quantum, schuld2019evaluating, schuld2020circuit}, we note that from the product rule, the gradient can be obtained from the sum of gradients of individual sub-terms, and thus can be found by computing the sum of the parameter-shifted circuits~\citep{mitarai2018quantum, schuld2019evaluating, schuld2020circuit} of gates that share the same classical parameter. Therefore, the evaluation of the gradient via parameter shift rule would then cost $\mathcal{O} (d)$ queries for single layer QKANs, and $\mathcal{O} (d^{2L})$ for $L$-layer QKANs. Note that the number of queries required to evaluate the gradient of a single parameter in the QKAN architecture also grows exponentially with the number of layers. 

Given that it is costly to obtain the full analytical gradient, we can make use of gradient estimation techniques to achieve a much more efficient estimation. Instead of perturbing each occurrence of the variable individually, one can obtain an estimate of the gradient by finite difference methods, which would only require a perturbation of shared variables once. By extension, one can also use simultaneous perturbation stochastic approximation (known as SPSA) \citep{spall1992multivariate} to produce gradient estimates with a cost unrelated to the number of parameters (both free and repeated) to achieve a much more efficient training strategy.

The circuit parameters can then be updated using optimizer algorithms such as gradient descent or Adam~\citep{kingma2015adam}. Further, one can also incorporate quantum natural gradient methods~\citep{stokes2020quantum, koczor2022quantum} to achieve faster convergence by again using parameter shift rules or, with a constant cost, SPSA, to compute the quantum Fisher information matrix~\citep{gacon2021simultaneous}.

\subsection{Interpretability of the QKAN learning model}\label{subsec:interpret}

Interpretability of KANs, as formalized in Ref.~\citep{liu2025kan}, refers to identifying and pruning unimportant branches of the model. In QKAN, this can be achieved in a manner consistent with our block‐encoding parametrization: the same parametrized quantum state that defines the weights also provides sample access to their relative importance.

Consider the first method in \cref{subsec:parametrization} where the weights are encoded from the real amplitudes of $\ket{\psi(\vec{\theta})} = U(\vec{\theta})\ket{0}$, i.e., $w_j = \re(\psi_j(\vec{\theta}))$. To obtain sampling access consistent with the encoded weights, we use a standard LCU combination of $U(\vec{\theta})$ and $U(\vec{\theta})^\dagger$ and a single ancilla qubit to prepare the state
\begin{equation}
    \ket{0}\ket{\mathrm{Re}(\psi)} + \ket{1}\ket{\mathrm{Im}(\psi)},
\end{equation}
where $\ket{\mathrm{Re}(\psi)} = \sum_j \mathrm{Re}(\psi_j)\ket{j}$. By measuring this state in the computational basis and postselecting on $\ket{0}$ in the first register, we sample indices with probability
\begin{equation}
    \Pr[j] = |\re(\psi_j(\vec{\theta}))|^2 = |w_j|^2.
\end{equation}
The postselection succeeds with probability $p_{\mathrm{succ}}=\sum_j w_j^2=\|\vec{w}\|_2^2$, which, if desired, can be increased via amplitude amplification, yielding a sampling distribution arbitrarily close to $\Pr[j] = w_j^2/\|\vec{w}\|_2^2$. As a result, indices with large $|w_j|$ are sampled more frequently, while small weights are further suppressed when squared. Thus, basis functions with vanishing weights will have low sampling probability and can be pruned after the training is completed. Therefore, by sampling the trained state-preparation circuits, we can obtain a compressed and pruned model of the trained QKAN that can be explicitly interpreted. If one additionally requires the signed values $w_j$ for selected indices $j$, they can be estimated using a standard Hadamard-test combined with amplitude estimation, analogous to the output-estimation step described in \cref{sec:application_learning_model}.

Finally, classical KANs support sparsity via the $\ell_1$ regularization. To mirror this while preserving sampling access, we can impose the $\| \vec{w} \|_1 \le 1$ regularization via coherent squaring of the real amplitudes, as described in \cref{subsec:parametrization}.

\subsection{Numerical illustration of Gaussian state preparation via QKAN}
\label{subsec:gaussian_numerics}

To complement our theoretical analysis, we provide a concise numerical example illustrating Gaussian state preparation via QKAN and highlighting several practical considerations. As discussed in \cref{sec:application_state_preparation}, the only approximation in our two-layer construction arises in the second layer, where the exponential decay $e^{-\beta(x+1)}$ is implemented using a Chebyshev polynomial. The first layer, which prepares the $D$-dimensional grid and evaluates low-degree Chebyshev polynomials, is exact and incurs no approximation or block-encoding error. Supplementary Note 4.B provides a worst-case bound on the approximation error of the exponential; here we examine how the approximation behaves numerically and how it affects the resulting state.

We specialize to $D=2$ and consider a $32\times 32$ grid corresponding to $n=5$ qubits per dimension, setting $\beta = 6$. This 2D example is also illustrated in \cref{fig:2d_gaussian}. The first QKAN layer computes
\begin{equation}
    z_{(i,j)}
    = \tfrac{1}{2}\left[T_2(x_i)+T_2(y_j)\right]
    = x_i^2 + y_j^2 - 1,
\end{equation}
exactly at each grid point, producing a diagonal operator that encodes a (shifted) squared radius on the grid.

The second layer approximates $x\mapsto e^{-\tilde\beta(x+1)}$ with $\tilde\beta = \beta/2$ using a degree-$d$ polynomial $P_d(x)$. The standard QSVT constructions require polynomials of definite parity, so we first decompose the target function into even and odd components,
\begin{equation}
    f(x) = \tfrac{1}{2}\big[f_{\mathrm{even}}(x) + f_{\mathrm{odd}}(x)\big] \;=\; \tfrac{1}{2}\big[(f(x)+f(-x)) + (f(x)-f(-x))\big],
\end{equation}
approximate each component by a truncated Taylor series of degree $d$, and combine them via an equal-superposition LCU using one additional ancilla qubit, as in \cref{theorem:qet_of_hermitian_matrices}. To satisfy the amplitude constraints $|P_d(x)| \le \tfrac{1}{2}$ needed for \cref{theorem:qet_of_hermitian_matrices}, we rescale the target function to $\tfrac{1}{2}e^{-\tilde\beta(x+1)}$; this introduces only a constant-factor overhead in the final amplitude-amplification step. Finally, to implement the resulting polynomials using QSVT, we compute the required phase angles numerically using the \texttt{pyqsp} Python package~\citep{martyn2021grand}. 

\cref{fig:qkan_gaussian_numerics}~(a) shows the resulting degree-3 polynomial $P_3(x)$ approximating $\tfrac{1}{2}e^{-\tilde\beta(x+1)}$. The approximation is accurate on the interval $[-1,1]$, as required for QKAN, and deteriorates outside this range. Applying the polynomial to the output of the first layer yields a diagonal operator whose entries approximate the target Gaussian values. \cref{fig:qkan_gaussian_numerics}~(b) displays the absolute error in the normalized two-dimensional Gaussian state prepared using this degree-3 polynomial. For each grid point $(i,j)$ we plot $\bigl|\psi_{\mathrm{exp}}(i,j) - \psi_{\mathrm{QKAN}}(i,j)\bigr|$, where $\psi_{\mathrm{QKAN}}$ is the normalized output state and $\psi_{\mathrm{exp}}$ is the ideal Gaussian. The error is largest near the center of the distribution. This matches the behavior in \cref{fig:qkan_gaussian_numerics}~(a): the center corresponds to the region where $x$ is close to $-1$ and the polynomial approximation error is maximal. 

Finally, \cref{fig:qkan_gaussian_numerics}~(c) shows the $\ell_2$ error, $\bigl\|\psi_{\mathrm{exp}} - \psi_{\mathrm{QKAN}}\bigr\|_2$, as a function of the polynomial degree $d$, together with the theoretical bound from Supplementary Note 4.B. The empirical error decays exponentially with $d$, in agreement with the lemma, and plateaus at $d=20$ when the error reaches machine precision of $10^{-14}$–$10^{-15}$.

\begin{figure}
    \centering
    \includegraphics[width=\linewidth]{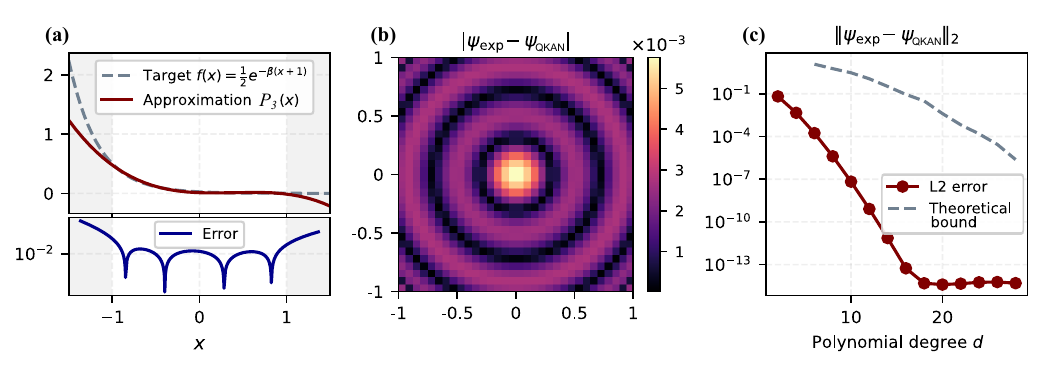}
    \caption{\textit{Numerical illustration of Gaussian state preparation via QKAN.}
    (a) Degree-3 polynomial $P_3(x)$ approximating $\tfrac{1}{2}e^{-\beta(x+1)}$ on $[-1,1]$.  
    (b) Absolute amplitude error $\bigl|\psi_{\mathrm{exp}}(i,j) - \psi_{\mathrm{QKAN}}(i,j)\bigr|$ of the normalized 2D Gaussian state prepared using $P_3$.  
    (c) $\ell_2$-error between the prepared and target states as a function of polynomial degree $d$, compared to the theoretical bound from Supplementary Note 4.B; the empirical error decreases exponentially until saturating at machine precision.}
    \label{fig:qkan_gaussian_numerics}
\end{figure}

\subsection{Generalized state preparation via CHEB-QKAN}\label{subsec:state_preparation_via_cheb_qkan}

When discussing multivariate state preparation in \cref{sec:application_state_preparation} we assumed the input register $\vec{x}$ encoded a regular $D$-dimensional grid to prepare multivariate distributions on a grid. We now generalize this to an arbitrary input by showing that a CHEB-QKAN layer implements a general multivariate state-preparation routine on any block-encoded input vector $\vec{x}$.  \cref{theorem:state_preparation_via_cheb_qkan} guarantees that we can prepare a quantum state with amplitudes matching the entries of an arbitrary $K$-dimensional CHEB-QKAN layer, as long as the input and weights can be block-encoded efficiently. The proof is deferred to Supplementary Note 5 and proceeds by applying the block-encoding to the uniform superposition followed by amplitude amplification.

\begin{theorem}[Multivariate state preparation via CHEB-QKAN]
    \label{theorem:state_preparation_via_cheb_qkan}
    Let $\varepsilon \in (0, \frac{1}{2})$. We are given access to a controlled diagonal $(1, a_x, \varepsilon_x)$-block-encoding $U_x$ of an input vector $\vec{x} \in [-1,1]^{N}$, and access to $d+1$ controlled diagonal $(1, a_w, \varepsilon_w)$-block-encodings $U_{w^{(r)}}$ of weight vectors $\vec{w}^{(r)} \in [-1,1]^{NK}$. Let $\mathcal{N}^2 := \sum_{q=1}^{K}\left(\frac{1}{N}\sum_{p=1}^{N} \phi_{pq}(x_p)\right)^2$ and $d$ be the maximal degree of Chebyshev polynomials used in parameterization of activation functions $\phi_{pq}$. If $\varepsilon_x \le \frac{\mathcal{N}^2}{144Kd^2}\varepsilon^2$ and $\varepsilon_w \le \frac{\mathcal{N}}{3\sqrt{K}}\varepsilon$, then we can prepare a $\ell_2$ normalized quantum state $\ket{\psi}$ with amplitudes corresponding to a $\emph{CHEB-QKAN}$ layer such that
    \begin{equation}
        \left\|\ket{\psi} - \frac{1}{\mathcal{N}}\sum_{q=1}^{K} \left( \frac{1}{N}\sum_{p=1}^{N} \phi_{pq}(x_p)\right)\ket{q}_k \right\|_2 \le \varepsilon,
    \end{equation}
    The procedure succeeds with arbitrarily high probability by using $\mathcal{O}(\sqrt{K}d^2/\mathcal{N})$ applications of controlled-$U_x$ and controlled-$U_{w^{(r)}}$ and their adjoint versions.
\end{theorem}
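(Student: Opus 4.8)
First I would invoke \cref{theorem:cheb_qkan} to build the diagonal block-encoding $U_\Phi$ of $\diag(v_1,\dots,v_K)$, where $v_q := \frac{1}{N}\sum_{p=1}^{N}\phi_{pq}(x_p)$, with block-encoding error $\varepsilon_{\mathrm{BE}} := 4d\sqrt{\varepsilon_x}+\varepsilon_w$ and $\mathcal{O}(d^2)$ controlled calls to $U_x$ and the $U_{w^{(r)}}$. Writing $M := \bra{0}_{\mathrm{aux}}U_\Phi\ket{0}_{\mathrm{aux}}$, so that $\|M-\diag(v_1,\dots,v_K)\|\le\varepsilon_{\mathrm{BE}}$, I would apply $U_\Phi$ to $\ket{0}_{\mathrm{aux}}\ket{+}_k$ with $\ket{+}_k := H_k\ket{0}_k$. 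The component flagged by $\ket{0}_{\mathrm{aux}}$ is then $\ket{0}_{\mathrm{aux}}\,M\ket{+}_k$, an unnormalized encoding of the target vector whose $\ket{q}$-amplitude is within $\varepsilon_{\mathrm{BE}}/\sqrt{K}$ of $v_q/\sqrt{K}$.

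Next I would amplify the good subspace defined by the projector $\ketbra{0}{0}_{\mathrm{aux}}\otimes I_k$. Its amplitude is $\|M\ket{+}_k\|$, which the block-encoding guarantee places within $\varepsilon_{\mathrm{BE}}$ of $\|\diag(v_1,\dots,v_K)\ket{+}_k\| = \mathcal{N}/\sqrt{K}$. Hence $\Theta(\sqrt{K}/\mathcal{N})$ rounds of amplitude amplification~\cite{Brassard_2002} drive the flag to $\ket{0}_{\mathrm{aux}}$ with probability arbitrarily close to one. Since each round uses $\mathcal{O}(1)$ applications of $U_\Phi$ and $U_\Phi^{\dagger}$, and each $U_\Phi$ costs $\mathcal{O}(d^2)$ controlled block-encoding calls, the total is $\mathcal{O}(\sqrt{K}\,d^2/\mathcal{N})$, and measuring the flag register yields the prepared state with the claimed arbitrarily high success probability.

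The final and most delicate step is the error budget, which I would split into three pieces each bounded by $\varepsilon/3$. The block-encoding inaccuracy enters through $\|M\ket{+}_k - \tfrac{1}{\sqrt{K}}\sum_q v_q\ket{q}\|\le\varepsilon_{\mathrm{BE}}$; dividing this perturbation by the ideal norm $\mathcal{N}/\sqrt{K}$ shows the \emph{renormalized} output differs from $\frac{1}{\mathcal{N}}\sum_q v_q\ket{q}$ by $\mathcal{O}\!\big(\tfrac{\sqrt{K}}{\mathcal{N}}\varepsilon_{\mathrm{BE}}\big)$. Writing $\varepsilon_{\mathrm{BE}}=4d\sqrt{\varepsilon_x}+\varepsilon_w$ and tracking the two summands separately, the hypotheses $\varepsilon_x\le\frac{\mathcal{N}^2}{144Kd^2}\varepsilon^2$ and $\varepsilon_w\le\frac{\mathcal{N}}{3\sqrt{K}}\varepsilon$ are exactly calibrated so that $\frac{\sqrt{K}}{\mathcal{N}}\cdot 4d\sqrt{\varepsilon_x}\le\varepsilon/3$ and $\frac{\sqrt{K}}{\mathcal{N}}\varepsilon_w\le\varepsilon/3$. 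The third piece, the residual amplitude-amplification error, I would also bound by $\varepsilon/3$, with the restriction $\varepsilon\in(0,\tfrac12)$ keeping the normalization in its first-order regime.

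I expect the obstacle to lie in this last step rather than in the algorithmic skeleton: one must convert the \emph{spectral-norm} block-encoding error of $U_\Phi$ into an $\ell_2$ error on a \emph{renormalized} state, controlling both the $\sqrt{K}/\mathcal{N}$ blow-up and the perturbation of $\|M\ket{+}_k\|$ about $\mathcal{N}/\sqrt{K}$, with constants tight enough to land each term at $\varepsilon/3$. A secondary subtlety is that textbook amplitude amplification presumes knowledge of the good-subspace amplitude $\approx\mathcal{N}/\sqrt{K}$; since $\mathcal{N}$ is generally unknown, I would either assume a known lower bound on $\mathcal{N}$ or use a fixed-point amplification variant, checking that this does not inflate the query count beyond $\mathcal{O}(\sqrt{K}\,d^2/\mathcal{N})$.
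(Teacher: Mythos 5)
Your proposal follows essentially the same route as the paper: build $U_\Phi$ via \cref{theorem:cheb_qkan}, apply it to $\ket{0}_{\mathrm{aux}}\ket{+}_k$, post-select/amplify on the auxiliary register with $\mathcal{O}(\sqrt{K}/\mathcal{N})$ rounds, and convert the spectral-norm block-encoding error into an $\ell_2$ error on the renormalized state using the lower bound $\sqrt{p}\ge \mathcal{N}/\sqrt{K}-(4d\sqrt{\varepsilon_x}+\varepsilon_w)$. The only cosmetic difference is bookkeeping: the paper bounds $\gamma/(\mathcal{N}/\sqrt{K}-\gamma)\le\varepsilon$ via the $\varepsilon/(1+\varepsilon)\ge\tfrac{2}{3}\varepsilon$ rearrangement rather than your three-way $\varepsilon/3$ split, and it does not address your (valid) side remark about needing a lower bound on $\mathcal{N}$ or fixed-point amplification.
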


Such a strategy can be viewed as a multivariate extension of the nonlinear amplitude transformation procedure outlined by \citet{guo2024nonlinear} because the amplitudes of the prepared state are multivariate functions of the input vector. In their approach, a nonlinear amplitude transformation unitary is applied to a uniform superposition state, followed by amplitude amplification. On the other hand, achieving a multivariate version of exponential improvement through importance sampling, as proposed by~\citet{rattew2023nonlineartransformationsquantumamplitudes}, remains an open problem, due to the challenges associated with implementing importance sampling in the multivariate setting.

\section*{Acknowledgments} 
We thank Juan Carrasquilla, Naixu Guo, and Xiufan Li for discussions and feedback. 
P.-W.H. is partially supported by the Engineering and Physical Sciences Research Council (EPSRC) Doctoral Training Partnership (DTP) under grant EP/W524311/1, with a CASE Conversion Studentship in collaboration with Quantum Motion.
This work is supported by the National Research Foundation, Singapore, and A*STAR under its CQT Bridging Grant and its Quantum Engineering Programme under grant NRF2021-QEP2-02-P05. 
Portions of this manuscript were drafted or edited with the assistance of ChatGPT to improve clarity and style. 

\section*{Data Availability Statement}
This manuscript does not report data generation or analysis.

\section*{Code Availability Statement}
This manuscript does not report any new code. All algorithmic details necessary for implementation are described explicitly in the manuscript.

\section*{Competing Interests}
The authors declare no competing interests.

\section*{Author Contributions}
All authors contributed to the development of the QKAN construction and to writing the final manuscript. P.I. and P.-W.H. worked out the proofs of theorems. L.P. and P.R. conceived the project. P.R. supervised the project.

\bibliography{bibliography}
\clearpage
\appendix

\counterwithin{theorem}{section}
\counterwithin{equation}{section}
\counterwithin{definition}{section}
\crefalias{section}{appendix}

\begin{center}
\noindent{\large\bfseries \appendixname{} for ``\textsl{QKAN: Quantum Kolmogorov-Arnold Networks with Applications in Machine Learning and Multivariate State Preparation}''}
\end{center}
\setcounter{secnumdepth}{3}
\appendixtableofcontents

\section{Cloning diagonal block-encodings}\label{sec:cloning_diagonal_block_encodings}
We show the simple fact that diagonal block-encodings can be duplicated across additional registers by tensoring with the identity:
\begin{lemma}[Cloning diagonal block-encodings]
    Let $U_x$ be an $(\alpha,a,\varepsilon)$-block-encoding of $\diag(x_1,\dots,x_{N})$. Then $U_x \otimes I_{k}$ is an $(\alpha,a,\varepsilon)$-block-encoding of
    \begin{equation}
        \diag(x_1,\dots,x_{N}) \otimes I_{k} = \diag(\underbrace{x_1,\dots,x_1}_{K},\dots,\underbrace{x_{N},\dots,x_{N}}_{K}).
    \end{equation}
    \label{lemma:cloning_diagonal_block_encodings}
\end{lemma}

\begin{proof}
    By definition, $\|\diag(x_1,\dots,x_{N})-\alpha(\bra{0}_a\otimes I_{n})U_x(\ket{0}_a\otimes I_{n})\| \leq \varepsilon$. Therefore,
    \begin{align*}
        &\|\diag(x_1,\dots,x_{N})\otimes I_{k} - \alpha(\bra{0}_a\otimes I_{n})U_x\otimes I_{k}(\ket{0}_a\otimes I_{n})\| \\
        =&  \|\diag(x_1,\dots,x_{N})\otimes I_{k} - \alpha(\bra{0}_a\otimes I_{n})U_x(\ket{0}_a\otimes I_{n})\otimes I_{k}\| \\
        =& \|\diag(x_1,\dots,x_{N})-\alpha(\bra{0}_a\otimes I_{n})U_x(\ket{0}_a\otimes I_{n})\| \leq \varepsilon.
    \end{align*}
    Hence, $U_x \otimes I_{k}$ is an $(\alpha,a,\varepsilon)$-block-encoding of $\diag(x_1,\dots,x_{N}) \otimes I_{k}$, as claimed.
\end{proof}

\section{Resource analysis and error propagation}\label{subsec:resource_analysis}
In this supplementary section, we provide a detailed analysis of the number of gates and auxiliary qubits for a multilayer CHEB-QKAN. Consider a CHEB-QKAN with $L$ layers, where $N^{(l)}$ denotes the number of nodes in the $l$-th layer, and $N^{(0)}$ represents the input dimension. We assume that all activation functions in each layer are linear combinations of Chebyshev polynomials of degree at most $d$. Let $C_x^{(l)}$ represent the gate complexity of constructing a controlled diagonal block-encoding corresponding to the output of the $l$-th layer, with $C_x^{(0)}$ being the gate complexity of constructing a controlled diagonal block-encoding of the input. Additionally, let $C_w^{(l)}$ denote the gate complexity of constructing a controlled diagonal block-encoding of weights for the $l$-th layer. In \cref{tab:resource_requirements} we estimate the gate complexity of constructing a diagonal block-encoding of the output of the $l+1$-th layer in terms of the number of applications of $C_x^{(l)}$ and $C_w^{(l)}$, which depend on the block-encoding methods used for the input and weights. 

Constructing a controlled version of the obtained block-encoding, which is required for the next layer, can be achieved by controlling every gate in the block-encoding construction. This process has the same asymptotic complexity as the uncontrolled block-encoding, with an additional overhead of at most $\mathcal{O}(n_{\mathrm{total}}^2)$ one- and two-qubit gates, where $n_{\mathrm{total}}$ is the total number of qubits~\citep{nielsen_quantum_2011}. Furthermore, here, we do not explicitly consider the additional one- and two-qubit gate overhead, which is upper bounded by $\mathcal{O}(n_{\mathrm{total}}^2)$, and arises from decomposing multi-controlled operations and other steps in the construction. The reason for this omission is that we expect the dominant gate cost to come from constructing the diagonal block-encodings for inputs and weights, rather than from these additional overheads.

 \Cref{tab:resource_requirements} presents the cumulative gate complexity and qubit requirements for each step in the construction of the $(l+1)$-th KAN layer. The auxiliary qubits in \cref{tab:resource_requirements} are necessary for the block-encoding of the current transformation. Since the transformations are performed sequentially, each subsequent transformation adds to the gate and qubit cost. Consequently, the final transformation, denoted as the \textit{SUM} step in \cref{tab:resource_requirements}, represents the total cost of realizing the $l+1$ layers.

\begin{table}
    \centering
    {\setlength{\tabcolsep}{2em}
    \renewcommand{\arraystretch}{1.5}
    \begin{tabular}{cll}
        \toprule
        \textit{Step} & \textit{Gates} & \textit{Auxiliary qubits} \\
        \midrule
         \textit{DILATE}  & $C_x^{(l)}$ & $a_x^{(l)}$ \\
         \textit{CHEB}  & $dC_x^{(l)}$ & $a_x^{(l)} + 1$ \\
         \textit{MUL}  & $dC_x^{(l)} + C_w^{(l)}$ & $a_x^{(l)} + 1 + a_w^{(l)}$ \\
         \textit{LCU}  & $\frac{d^2}{2}C_x^{(l)} + dC_w^{(l)}$ & $a_x^{(l)} + 1 + a_w^{(l)} + \log_2{(d+1)}$ \\
         \textit{SUM}  & $\frac{d^2}{2}C_x^{(l)} + dC_w^{(l)}$& $a_x^{(l)} + 1 + a_w^{(l)} + \log_2{(d+1)} + \log_2{N^{(l)}}$ \\
        \bottomrule &$\underbrace{\hphantom{\frac{d^2}{2}C_x^{(l)} + dC_w^{(l)}}}_{=C_x^{(l+1)}}$ & $\underbrace{\hphantom{a_x^{(l)} + 1 + a_w^{(l)} + \log_2{(d+1)} + \log_2{N^{(l)}}}}_{=a_x^{(l+1)}}$
    \end{tabular}
    }
    \caption{\textit{Resource estimate for the $(l+1)$-th CHEB-QKAN layer.} We estimate the gate complexity $C_x^{(l+1)}$ and qubit requirements $a_x^{(l+1)}$ for constructing the $(l+1)$-th layer based on the corresponding quantities of the $l$-th layer and the cost $C_w^{(l)}$ of block-encoding the weight vectors. This recursive approach is used because each layer's block-encoding serves as the primitive building block for constructing the subsequent layer. The resource requirements are presented cumulatively, with costs increasing for each step. The final \textit{SUM} step indicates the total accumulated cost, in terms of gates and auxiliary qubits, for realizing $l+1$ layers. Auxiliary qubits are required at each step to encode the current transformation.}
    \label{tab:resource_requirements}
\end{table}

To estimate the gate and qubit complexity of an $L$-layer CHEB-QKAN in terms of $C_x^{(0)}$ and $C_w^{(l)}$, we recursively expand the gate cost and obtain the following expression for $C_x^{(L)}$:

\begin{align*}
    C_x^{(L)} &= \frac{d^2}{2}C_x^{(L-1)} + dC_w^{(L-1)} \nonumber\\
              &= \frac{d^4}{4}C_x^{(L-2)} + \frac{d^3}{2}C_w^{(L-2)} + dC_w^{(L-1)} \nonumber\\
              &= \frac{d^6}{8}C_x^{(L-3)} + \frac{d^5}{4}C_w^{(L-3)} + \frac{d^3}{2}C_w^{(L-2)} + dC_w^{(L-1)} \nonumber\\
              &\;\;\vdots \nonumber\\
              &= \left(\frac{d^2}{2}\right)^{L}C_x^{(0)} + d\sum_{l=1}^{L}\left(\frac{d^2}{2}\right)^{l-1}C_w^{(L-l)}.
\end{align*}
The total number of auxiliary qubits, $a_x^{L}$, can be expressed as:
\begin{equation}
    a_x^{(L)} = a_x^{(L-1)} + 1 + a_w^{(L-1)} + \log_2{(d+1)} + \log_2{N^{(L-1)}} = a_x^{(0)} + \sum_{l=1}^{L-1} \left(1 + a_w^{(l)} + \log_2{(d+1)} + \log_2{N^{(l-1)}}\right).
\end{equation}
In summary, a $L$-layer CHEB-QKAN requires $\mathcal{O}\left(d^{2L}/2^L\right)$ applications of controlled block-encodings of the input and weights and a number of auxiliary qubits that grow linearly with $L$.

To assess the error propagation in a multi-layer QKAN, recall that by Theorem 1 in the main text we can create a block-encoding of the single-layer vector $\Phi(\vec x)$ up to the error $4d\sqrt{\varepsilon_x} + \varepsilon_w$.  Suppose we could encode weights perfectly, i.e. $\varepsilon_w = 0$. Then, starting with the $(1,a_x,\varepsilon_x)$ block-encoding $U_x$ of the input vector $\vec x$, the error in the first few layers evolves as
\begin{equation}
    \varepsilon_x \longrightarrow 4d\sqrt{\varepsilon_x} \longrightarrow 4d\sqrt{4d\sqrt{\varepsilon_x}} \longrightarrow
    \dots
\end{equation}
Consequently, after $L\ge1$ layers, the accumulated error $\varepsilon_{\rm tot}$ in the block-encoding is
\begin{equation}
  \label{eq:error-input}
  \varepsilon_{\rm tot}
  = {\varepsilon_x}^{(1/2)^L}\,(4d)^{1 + \tfrac12 + \tfrac14 + \dots + 2^{-L}}
  = \mathcal O\bigl(d^2\,{\varepsilon_x}^{(1/2)^L}\bigr).
\end{equation}
As a result, such recursive error propagation quickly amplifies any nonzero error. To achieve a constant small error in the output, $\varepsilon_{\rm tot}=O(1)$, the input block-encoding must satisfy $\varepsilon_x = O\bigl(d^{-2^L}\bigr)$, i.e., decaying super-exponentially in $L$. The same analysis applies if we consider noiseless input ($\varepsilon_x=0$) and track weight error $\varepsilon_w$.  In the first few layers, the output error evolves as
\begin{equation}
    \varepsilon_w  \longrightarrow
    4d\sqrt{\varepsilon_w} + \varepsilon_w \longrightarrow 4d\sqrt{4d\sqrt{\varepsilon_w} + \varepsilon_w} + \varepsilon_w \longrightarrow \dots
\end{equation}
and after $L\ge1$ layers the accumulated error $\varepsilon_{\rm tot}$ can be upper-bounded by
\begin{equation}
  \label{eq:error-weight}
  \varepsilon_{\rm tot}
  \le 16d^2 \sum_{k=0}^{L-1} {\varepsilon_w}^{(1/2)^k}
  = \mathcal O\bigl(d^2\,{\varepsilon_w}^{(1/2)^L}\bigr).
\end{equation}
Similarly, to achieve $\varepsilon_{\rm tot}=O(1)$ the weight block-encoding error must satisfy $\varepsilon_w = O\bigl(d^{-2^L}\bigr)$.

These error-propagation bounds translate directly into the fault-tolerant gate-count overhead when using a discrete universal set (e.g. Clifford + $T$). Preparing an arbitrary quantum state or block-encoding a classical matrix to precision $\varepsilon$ requires $\mathcal{O}(\log (\tfrac{1}{\varepsilon}))$ $T$ gates~\citep{ross2014optimal, clader2023quantum}. From our multi-layer error analysis, the input (or weight) encoding error must satisfy $\varepsilon_{x/w} = O\bigl(d^{-2^L}\bigr)$ to keep the end-to-end error $\varepsilon_{\rm tot} = \mathcal{O}(1)$ after $L$ layers. Plugging this into the $\mathcal{O}(\log (\tfrac{1}{\varepsilon}))$ scaling gives a $T$-gate overhead of
\begin{equation}
    \log (\tfrac{1}{\varepsilon_{\rm tot}}) = \log (d^{2^L}) = 2^L \log d,
\end{equation}
i.e., an exponential dependence on the network depth $L$. Hence the total $T$ gate complexity of a $L$-layer CHEB-QKAN is
\begin{equation}
    C_{\rm tot} = \mathcal{O}\bigl((2d^2)^L \log d \bigr).
\end{equation}

\section{Output estimation of CHEB-QKAN}\label{subsec:output_estimation_cheb_qkan_appendix}
In this supplementary section, we show how amplitude estimation on a suitably constructed Hadamard-test circuit yields the precision and complexity stated in \cref{theorem:amplitude_estimation_cheb_qkan}. 

\begin{manualtheorem}{2}[Output estimation of CHEB-QKAN]
    Given access to a controlled diagonal $(1, a_x, \varepsilon_x)$-block-encoding $U_x$ of an input vector $\vec{x} \in [-1,1]^{N}$, and access to $d+1$ controlled diagonal $(1, a_w, \varepsilon_w)$-block-encodings $U_{w^{(r)}}$ of weight vectors $\vec{w}^{(r)} \in [-1,1]^{NK}$, we can estimate the value $\Phi(\vec{x})_q = \frac{1}{N}\sum_{p=1}^{N} \phi_{pq}(x_p)$ of the $q$-th component of the $\emph{CHEB-QKAN}$ layer to $\left( 4d\sqrt{\varepsilon_x} + \varepsilon_w + \delta\right)$-precision using $\mathcal{O}\left(d^{2}/\delta\right)$ applications of controlled-$U_x$ and controlled-$U_{w^{(r)}}$ and their adjoint versions. 
\end{manualtheorem}
\begin{proof}
By Theorem 1 in the main text, we can construct a diagonal block-encoding $U_{\Phi}$ such that
\begin{equation}
    \left\|\bra{0}_{\mathrm{aux}} U_{\Phi} \ket{0}_{\mathrm{aux}} - \mathrm{diag}\left( \frac{1}{N}\sum_{p=1}^{N}\phi_{p1}(x_p), \dots, \frac{1}{N}\sum_{p=1}^{N}\phi_{pK}(x_p)\right)\right\| \le 4d\sqrt{\varepsilon_x} + \varepsilon_w
\end{equation}
using $\mathcal{O}\left(d^2\right)$ applications of controlled-$U_x$ and controlled-$U_{w^{(r)}}$ and their adjoint versions. Consider $\tilde U_{\Phi} := (H\otimes I_{\mathrm{aux}+k})CU_{\Phi}(H\otimes I_{\mathrm{aux}+k})$. Applying $\tilde U_{\Phi}$ to the computational basis state $\ket{0}\ket{0}_{\mathrm{aux}}\ket{q}_k$ yields 

\begin{align*}
    \tilde U_{\Phi}\ket{0}\ket{0}_{\mathrm{aux}}\ket{q}_k &= \ket{0}\left( \frac{U_{\Phi} + I_{\mathrm{aux}+k}}{2}\right)\ket{0}_{\mathrm{aux}}\ket{q}_k + \ket{1}\left( \frac{U_{\Phi} - I_{\mathrm{aux}+k}}{2}\right) \ket{0}_{\mathrm{aux}}\ket{q}_k\\
    &= \left(\frac{\alpha_q + 1}{2} \right)\ket{0}\ket{0}_{\mathrm{aux}}\ket{q}_k + \sqrt{1-\left(\frac{\alpha_q + 1}{2} \right)^2}\ket{\perp},
\end{align*}
where $|\alpha_q - \frac{1}{N}\sum_{p=1}^{N} \phi_{pq}(x_p) | \le 4d\sqrt{\varepsilon_x} + \varepsilon_w$ by Theorem 1 in the main text and $\ket{\perp}$ is orthogonal to $\ket{0}\ket{0}_{\mathrm{aux}}\ket{q}_k$. Using amplitude estimation, we can estimate the absolute value $|\frac{\alpha_q + 1}{2}|$ of the amplitude of the good state $\ket{0}\ket{0}_{\mathrm{aux}}\ket{q}_k$ to the additive $\delta$-precision using $\mathcal{O}\left( 1/\delta \right)$ queries to $\tilde U_{\Phi}$ and $\tilde U_{\Phi}^{\dagger}$~\citep{Brassard_2002}, resulting in a total complexity of $\mathcal{O}\left(d^2/\delta\right)$ queries to controlled-$U_{x}$ and controlled-$U_{w^{(r)}}$ and their adjoint versions. Since $|\alpha_q| \le 1$, it follows that $\frac{\alpha_q + 1}{2} \ge 0$ and we can infer the value of $\alpha_q$ from the estimated amplitude. 
\end{proof}

\section{Multivariate Gaussian state preparation}

In this supplementary section, we show how to prepare a multivariate Gaussian state with the use of the grid encoding in \cref{lemma:multivariate_grid} and prove the complexity stated in \cref{theorem:multivariate_gaussian}.

\begin{manuallemma}{3}[Multivariate grid encoding]
Let $G_D \;=\;\diag\bigl(x_{(i_1,\dots,i_D)}\bigr)_{(i_1,\dots,i_D)\in\{0,\dots,2^n-1\}^D}$, where
\begin{equation}
    x_{(i_1,\dots,i_D)}
=\bigl(-1 + i_1\,s,\;-1 + i_2\,s,\;\dots,\;-1 + i_D\,s\bigr)
\;\in[-1,1]^D,
\end{equation}
be a uniform (vectorized) $D$-dimensional grid on $[-1,1]^D$ with step size $s = \tfrac{2}{2^n-1}$ in every direction. The dimension of $G_D$ is $D\,2^{nD}$. Then,
\begin{equation}
    G_1 = \sum_{i=1}^{n}\Bigl(\frac{2^{i-1}}{2^n-1}\Bigr) \, I_{i-1}\otimes XZX \otimes I_{n-i} \quad \text{and} \quad G_D \;=\; \sum_{j=1}^{D} I_{n(j-1)} \otimes G_1 \otimes I_{n(D-j)}\otimes \ketbra{j}{j},
\end{equation}
and we can create a $(1, D \lceil \log n\rceil,0)$-block-encoding of $G_D$ using $\mathcal{O}(Dn(\log n + \log D))$ two-qubit gates.
\end{manuallemma}

\begin{manualtheorem}{4}
    We can prepare a $Dn$-qubit quantum state $\ket{\psi}$ with amplitudes corresponding to a $D$-dimensional Gaussian distribution on a regular square grid of size $(2^n)^D$ such that 
    \begin{equation}
        \left\|\ket{\psi} \; - \; \frac{1}{\widetilde F_{\rm exp}}\sum_{i_1,\dots,i_D}^{2^n}\exp(-\tfrac{\beta}{2} \sum_{j=1}^D x_{i_j}^2)\,\ket{i_1,\dots,i_D} \right\|_2 \le \delta,
    \end{equation}
    where $\widetilde F_{\rm exp}$ normalizes the target state. The procedure succeeds with arbitrarily high probability by using $\mathcal{\widetilde O}\bigl( \beta^{\tfrac{D}{4}+\tfrac{1}{2}} \, n \, \log \tfrac{1}{\delta} \bigr)$ two-qubit gates and $D\lceil \log n\rceil + \lceil \log D\rceil + 4$ ancilla qubits.
\end{manualtheorem}

\subsection{Multivariate grid encoding}\label{subsec:appendix_multivariate_grid_encoding}

In the following, we extend the univariate grid encoding in \cref{lemma:univariate_grid_encoding} to $D$-dimensions to encode a multivariate grid, thus proving \cref{lemma:multivariate_grid}.

\begin{lemma}[Univariate grid encoding -- ~\citep{rosenkranz2025quantum}] 
Let $G_1 = \diag (-1 + j\,s)_{j=0,\dots,2^n-1}$ be a uniform grid on the interval $[-1,1]$ with step size $s = \tfrac{2}{2^n-1}$. Then,
\begin{equation}
    G_1 = \sum_{i=1}^{n}\Bigl(\frac{2^{i-1}}{2^n-1}\Bigr) \, I_{i-1}\otimes XZX \otimes I_{n-i}
\end{equation}
and we can create a $(1, \lceil \log n\rceil,0)$-block-encoding of $G_1$ using $\mathcal{O}(n \log n)$ two-qubit gates.
\label{lemma:univariate_grid_encoding}
\end{lemma}

\begin{proof}[Proof of \cref{lemma:multivariate_grid}]
Let $m = \lceil \log n\rceil$ and 
\begin{equation}
A \;=\;\sum_{i=1}^{n}\sqrt{\frac{2^{i-1}}{2^n-1}}\;\ketbra{i}{0}_m
\;+\;\bigl(\text{u.c.}\bigr),
\end{equation}
be a state-preparation unitary acting on $m$ ancilla qubits, where (u.c.) denotes an arbitrary unitary complement. For each $j=1,\dots,D$ define
\begin{equation}
    B_j
    =\sum_{i=1}^{n} \underbrace{I_{m(j-1)} \otimes \ketbra{i}{i}\otimes I_{m(D-j)}}_{Dm} \; \otimes \; \underbrace{I_{n(j-1)} \otimes I_{i-1} \otimes XZX \otimes I_{n-i} \otimes I_{n(D-j)}}_{Dn} \;+\;\bigl(\text{u.c.}\bigr)
\end{equation}
with underbraces denoting the numbers qubits the operators are acting on. Let $k = \lceil \log D\rceil$. Then, the unitary
\begin{equation}
    U_{G_D} =\Bigl(\underbrace{A^\dagger\otimes \dots \otimes A^\dagger}_{D \:\:\text{times}}\otimes I_{Dn}\otimes I_{k}\Bigr) \Bigl(\sum_{j=1}^D B_j\otimes\ketbra{j}{j} \;+\;\bigl(\text{u.c.}\bigr)\Bigr) \Bigl(\underbrace{A\otimes \dots \otimes A}_{D \:\:\text{times}}\otimes I_{Dn}\otimes I_{k}\Bigr) 
\end{equation}
is a $(1, D\,m,0)$–block-encoding of $G_D$ because
\begin{align*}
    \bra{0}_{Dm}\,U_{G_D}\,\ket{0}_{Dm} &= \Bigl(\bra{0}_{Dm}\,A^{\dagger\otimes D}\otimes I_{Dn}\otimes I_k\Bigr) \;\Bigl(\sum_{j=1}^D B_j\otimes\ketbra{j}{j} \;+\;\bigl(\text{u.c.}\bigr) \Bigr)\; \Bigl(A^{\otimes D}\ket{0}_{Dm}\,\otimes I_{Dn}\otimes I_k\Bigr) \\
    &=\sum_{j=1}^D I_{n(j-1)}\otimes \Bigl( \sum_{i=1}^{n} \bra{0}_m\,A^{\dagger} \ketbra{i}{i} A \ket{0}_m \: I_{i-1} \otimes XZX \otimes I_{n-i} \Bigr) \otimes I_{n(D-j)} \otimes \ketbra{j}{j} \\
    &=\sum_{j=1}^D I_{n(j-1)}\otimes \Bigl( \sum_{i=1}^{n}\Bigl(\frac{2^{i-1}}{2^n-1}\Bigr) \, I_{i-1}\otimes XZX \otimes I_{n-i} \Bigr) \otimes I_{n(D-j)} \otimes \ketbra{j}{j} \\
    &=\sum_{j=1}^D I_{n(j-1)}\otimes G_1 \otimes I_{n(D-j)} \otimes \ketbra{j}{j}.
\end{align*}
Finally, one verifies that $G_D \;=\; \sum_{j=1}^{D} I_{n(j-1)} \otimes G_1 \otimes I_{n(D-j)}\otimes \ketbra{j}{j}$ because 
\begin{align*}
    \bra{i_1, i_2, \dots,i_D}\, G_D\, \ket{i_1, i_2, \dots,i_D} &= \sum_{j=1}^D \bra{i_j}\, G_1 \,\ket{i_j} \; \ketbra{j}{j} \\
    &= \bigl(-1 + i_1\,s,\;-1 + i_2\,s,\;\dots,\;-1 + i_D\,s\bigr) \\
    &= x_{(i_1,\dots,i_D)}.
\end{align*}
The state preparation unitary $A$ can be constructed using $\mathcal{O}(n)$ two-qubit gates using a state preparation circuit with real coefficients~\citep{bergholm2005quantum}. The operator $\sum_{j=1}^D B_j\otimes\ketbra{j}{j} \;+\;\bigl(\text{u.c.}\bigr)$ is composed of $D$ $k$-qubit-controlled $B_j$ operators applied in sequence, where each $B_j$ requires $n$ $m$-qubit controlled $Z$ gates. Since an $m$-qubit-controlled $Z$ gate can be implemented using $\mathcal{O}(m)$ two-qubit gates~\citep{nie2024quantum, da2022linear}, the total two-qubit gate cost of $U_{G_D}$ is $\mathcal{O}(Dn(m+k)) = \mathcal{O}(Dn(\log n + \log D))$.
\end{proof}

\subsection{Polynomial approximation to exponential decay \texorpdfstring{$e^{-\beta(x+1)}$}{exp(-beta(x+1)}}
One can show by simple truncation the Taylor series expansion of $e^{-\beta(x+1)}$ on the interval $[-1,1]$ that $e^{-\beta(x+1)}$ can be approximated to $\varepsilon$ precision using a polynomial of degree $d = \mathcal{O}\bigl(\sqrt{\beta}\log \tfrac{1}{\varepsilon} \bigr)$. \cref{lemmas:poly_approximation_of_exp} makes this precise.

\begin{lemma}[Adapted from \citep{sachdeva2014faster}]
\label{lemmas:poly_approximation_of_exp}
$\forall  \beta>0, \varepsilon\in(0,1/2],$ there exists a polynomial $P_d$ of degree $d= \lceil\sqrt{2\lceil\max[\beta e^2,\log{(2/\varepsilon)}]\rceil\log{(4/\varepsilon)}}\rceil$ such that 
\begin{align*}
\max_{x \in [-1,1]}| P_d(x)-e^{-\beta(x+1)}| \le \varepsilon.
\end{align*}
\end{lemma}
This polynomial can be efficiently decomposed without error in the Chebyshev basis, and is hence implementable without error via QSVT with $\mathcal{O}\bigl(\sqrt{\tilde\beta}\log \tfrac{1}{\varepsilon} \bigr)$ queries to the block-encoding of $\Phi(G_D)$~\citep{childs2017quantum, low2017optimal, gilyen2019quantum, tang2024cs}.

\subsection{Bounding the filling fraction of a \texorpdfstring{$D$}{D}-dimensional Gaussian}\label{subsec:bounding_the_filling_fraction}
We demonstrate in the main text that the block-encoding of $\Phi_2(G_D)$ yields an approximation of the target $D$-dimensional Gaussian when applied to the uniform superposition state. The resulting state is subnormalized and needs to be amplified using amplitude amplification, which requires knowing a lower bound on the success probability $p$, which is given by the (discretized) $\ell_2$ filling fraction $\widetilde F_{P_d}^2$ of the approximating polynomial $P_d(x)$ on $x\in[-1,1]$: 
\begin{equation}
    p \; = \; \left\|\Phi_2(G_D)\ket{+}_{Dn}\right\|_2^2 \; = \; \frac{1}{2^{Dn}} \sum_{i_1,\dots,i_D} P_d\Bigl(\tfrac{2}{D}\sum_{j=1}^D x_{i_j}^2-1\Bigr)^2 \; = \; \widetilde F_{P_d}^2.
\end{equation}
In the following, we will bound $\widetilde F_{P_d}$ by first approximating it with its "target" version $\widetilde F_{\rm exp}$, defined as 
\begin{equation}
    \widetilde F_{P_d} \; \approx \; \widetilde F_{\rm exp} \; = \; \Bigl(\frac{1}{2^{Dn}} \sum_{i_1,\dots,i_D} \exp \bigl( -\beta \sum_{j=1}^D x_{i_j}^2\bigr)\Bigr)^{\tfrac{1}{2}}  \; = \; \Bigl(\frac{1}{2^{n}} \sum_{i=0}^{2^n-1} \exp \bigl( -\beta x_{i}^2\bigr)\Bigr)^{\tfrac{D}{2}}
\end{equation}
and then approximating $\widetilde F_{\rm exp}$ with its continuous version, for which the lower bound can be computed analytically:
\begin{equation}
    \widetilde F_{\rm exp} \; \approx \; F_{\rm exp} \;=\; \Bigr(\frac{1}{2^D}\int_{[-1,1]^D} dx_1\cdots dx_D \: \exp \bigl(-\sum_{i=1}^D \beta_i\,x_i^2\bigr) \Bigr)^{\tfrac{1}{2}} \;=\; \Bigl(\frac{1}{2} \int_{[-1,1]} dx \, \exp \bigl( -\beta x^2\bigr)\Bigr)^{\tfrac{D}{2}}.
\end{equation}
First, we consider the approximation error from the polynomial expansion:
\begin{align*}
    \left|\widetilde F_{\rm P_d}^2 - \widetilde F_{\rm exp}^2 \right| &= \left\lvert\left(\frac{1}{2^n}\right)^D\sum_{i_1,\dots,i_D} P_d\Bigl(\tfrac{2}{D}\sum_{j=1}^D x_{i_j}^2-1\Bigr)^2 - \left(\frac{1}{2^n}\right)^D \sum_{i_1,\dots,i_D} \exp \bigl( -\beta \sum_{j=1}^D x_{i_j}^2\bigr)\right\rvert \\
    &\le \left(\frac{1}{2^n}\right)^D \sum_{i_1,\dots,i_D} \left\lvert P_d\Bigl(\tfrac{2}{D}\sum_{j=1}^D x_{i_j}^2-1\Bigr)^2 - \exp \bigl( -\beta \sum_{j=1}^D x_{i_j}^2\bigr)\right\rvert.
\end{align*}
By the polynomial approximation (\cref{lemmas:poly_approximation_of_exp}), we are guaranteed that $\max_{x \in [-1,1]}| P_d(x)-e^{-\tfrac{D\beta}{4}(x+1)}| \le \varepsilon$. Consequently, $| P_d(x)^2-e^{-\tfrac{D\beta}{2}(x+1)}| \le 2\varepsilon \cdot \max_{x \in [-1,1]} \{ P_d(x), e^{-\tfrac{D\beta}{4}(x+1)}\} \le 2\varepsilon$. As a result, we obtain
\begin{equation}
    \left|\widetilde F_{\rm P_d}^2 - \widetilde F_{\rm exp}^2 \right| \le 2\varepsilon.
\end{equation}
Second, we consider the discretization error for the filling fraction, i.e., the difference between the discretized and continuous versions $\widetilde F_{\rm exp}$ and $F_{\rm exp}$. Recall the standard result on Riemann sums:
\begin{lemma}[Error bound on the Riemann sum]
    \label{lemma:error_bounds_riemann sums}
	Suppose that $f\colon [a,b]\rightarrow \mathbb{R}$ is twice continuously differentiable. 
	Let $x_i = \left( (b-a)i/N + a\right)$, then
	\begin{align*}
		\left|\frac{b-a}{N}\sum_{i=0}^{N-1} f(x_i)-\int_a^b f(x) dx\right|\leq \frac{(b-a)^2}{2N}|f'(x)|_{\mathrm{max}}^{x \in [a,b]}.
	\end{align*}
\end{lemma}
Here we consider the left Riemann sum for bounds, hence, when creating the multivariate grid, we need to replace the univariate grids to fit the grid points of the left Riemann sum. This can be obtained by rescaling the univariate grid by $\frac{2^n -1}{2^n}$ and shifting downwards by $\frac{1}{2^n}$, which can be a weighted sum of the univariate grid and identity via LCU~\citep{childs2012hamiltonian, gilyen2019quantum}.

For the one-dimensional case, the discretization error for the filling fraction can be found as follows:
\begin{equation}
    \left\lvert\frac{1}{2^n}\sum_{i=0}^{2^n-1} e^{-\beta x_i^2} - \frac{1}{2}\int_{-1}^1 e^{-\beta x^2}dx\right\rvert \le \frac{1}{2^n}\sqrt{\frac{2\beta}{e}},
\end{equation}
since the first derivative is bounded by $|(e^{-\beta x^2})'|_{\mathrm{max}} = \sqrt{\tfrac{2\beta}{e}}$. To extend the bound to the $D$-dimensional case, let $a = \frac{1}{2^n}\sum_{i=0}^{2^n-1} e^{-\beta x_i^2}$ and $b = \frac{1}{2}\int_{-1}^1 e^{-\beta x^2}dx$. Then,
\begin{equation}
    |a^D - b^D| = |a-b| \sum_{k=1}^D a^{D-k} b^{k-1} \le |a-b|\cdot D \quad \text{because} \quad a,b \le 1.
\end{equation}
Consequently, $\left| \widetilde F_{\rm exp}^2 - F_{\rm exp}^2 \right| \;=\; |a^D - b^D| \; \le \; \tfrac{D}{2^n}\sqrt{\tfrac{2\beta}{e}}$. Putting it all together, we have
\begin{equation}
    \left| \widetilde F_{P_d}^2 - F_{\rm exp}^2 \right| \;\le\; \left| \widetilde F_{P_d}^2 - \widetilde F_{\rm exp}^2 \right| + \left| \widetilde F_{\rm exp}^2 - F_{\rm exp}^2 \right| \; \le \; 2\varepsilon + \tfrac{D}{2^n}\sqrt{\tfrac{2\beta}{e}}.
\end{equation}
As a result, we have a lower bound on $\widetilde F_{P_d}$ in terms of $F_{\rm exp}$:
\begin{equation}
    \widetilde F_{P_d} \ge F_{\rm exp} \sqrt{1 - \tfrac{1}{F_{\rm exp}^2}(2\varepsilon + \tfrac{D}{2^n})} \ge F_{\rm exp} - \tfrac{1}{F_{\rm exp}}(2\varepsilon + \tfrac{D}{2^n}),
\end{equation}
assuming $2\varepsilon + \tfrac{D}{2^n} \le F_{\rm exp}^2$. Having a bound in terms of $F_{\rm exp}$ is useful because $F_{\rm exp}$ can be lower-bounded analytically:
\begin{lemma}[Continuous filling fraction of a $D$-dimensional Gaussian]
\label{lemma:continuous_filling_fraction_bound}
Let
\begin{equation}
    F_{\rm exp}^2 \;:=\;\frac{1}{2^D}\int_{[-1,1]^D} dx_1\cdots dx_D \: e^{-\sum_{i=1}^D \beta_i\,x_i^2}
\end{equation} 
be the $\ell_2$ filling fraction of a $D$-dimensional diagonal Gaussian on the interval $[-1,1]^D$. Then, for $\beta \ge 1$,
\begin{equation}
    F_{\rm exp}^2 \;\ge\;\,\Bigl(\frac{2}{3}\Bigr)^D\,\prod_{i=1}^D \beta_i^{-\tfrac12}. 
\end{equation}
\end{lemma}

\begin{proof}
The proof of the $D$-dimensional case is an extension of the one-dimensional result in Appendix E, Lemma 4 of \citep{mcardle2022quantum}. For $x \in [-1,1]$ it holds that $\exp (-\beta x^2) \ge 1 - \beta x^2$. For $\beta\ge1$, we bound $\tfrac{1}{2}\int_{-1}^1 e^{-\beta x^2}\,dx \;\ge\; \int_{-1/\sqrt\beta}^{1/\sqrt\beta} e^{-\beta x^2}\,dx \;\ge\; \int_{-1/\sqrt\beta}^{1/\sqrt\beta} (1-\beta x^2)\,dx = \tfrac{2}{3\sqrt{\beta}}$. Therefore for the $D$-dimensional product,
\begin{equation}
F_{\rm exp}^2
=\prod_{i=1}^D\frac{1}{2}\int_{-1}^1 e^{-\beta_i x_i^2}\,dx_i
\;\ge\;
\prod_{i=1}^D\frac{2}{3\sqrt{\beta_i}}
=
\Bigl(\frac{2}{3}\Bigr)^D\,\prod_{i=1}^D \beta_i^{-\tfrac12}.
\end{equation}
In particular, if $\beta_i=\beta$ for all $i$, then $F_{\rm exp} \;\ge\;\Bigl( \frac{2}{3\sqrt{\beta}}\Bigr)^{\tfrac{D}{2}}$ for $\beta \ge 1$. For $\beta < 1$, one can similarly show that $F_{\rm exp} \;\ge\;(2/3)^{\tfrac{D}{2}}$, independent of $\beta$.
\end{proof}
Using \cref{lemma:continuous_filling_fraction_bound}, we can bound
\begin{equation}
    \frac{1}{\widetilde F_{P_d}} \le \frac{2}{F_{\rm exp}} = \mathcal{O}\bigl( (\tfrac{3}{2})^{\tfrac{D}{2}} \beta^{\tfrac{D}{4}} \bigr).
\end{equation}

\subsection{Gate and qubit complexity of Gaussian state preparation}\label{subsec:gate_and_qubit_complexity_gaussian}

In this section, we calculate the two-qubit gate complexity and the number of ancilla qubits throughout our construction, thereby proving \cref{theorem:multivariate_gaussian}.

\begin{enumerate}
    \item \textit{Step 1 -- Grid encoding:} Implementing $U_{G_D}$ requires $\mathcal{O}\bigl(Dn(\log n +  \log D)\bigr)$ two-qubit gates, where $U_{G_D}$ is acting on $D(n+\lceil \log n\rceil) + \lceil \log D\rceil$ qubits, according to \cref{lemma:multivariate_grid}.
    
    \item \textit{Step 2 -- First layer:} The block-encoding of $\Phi_1(G_D)$ requires $\mathcal{O}(1)$ applications of $U_{G_D}$ and $\mathcal{O}(D\log n)$ additional two-qubit gates. It also consumes one additional ancilla qubit for the qubitization procedure \citep{lin2022lecturenotesquantumalgorithms}.
    
    \item \textit{Step 3 -- Second layer:} The block-encoding of $\Phi_2(G_D)$ requires $\mathcal{O}(d)$ applications of the block-encoding of $\Phi_1(G_D)$ and $\mathcal{O}\bigl(d\times Dn(\log n +  \log D)\bigr)$ additional two-qubit gates, and consumes two additional ancilla qubits according to Theorem 31 of Ref.~\citep{gilyen2019quantum}. Putting it all together, the two-qubit gate cost of constructing the block-encoding of $\Phi_2(G_D)$ is $\mathcal{O}\bigl( d\times(D\log n + \log D) \bigr)$. The number of main qubits $\Phi_2(G_D)$ is acting on is $Dn$ and the number of ancillas is $D\lceil \log n\rceil + \lceil \log D\rceil + 3$.

    \item \textit{Step 4 -- Amplitude amplification:} The QSVT-based fixed-point amplitude amplification requires $\mathcal{O}\bigl(\tfrac{1}{\widetilde F_{P_d}} \bigr)$ applications of the block-encoding of $\Phi_2(G_D)$ to amplify the subnormalized state $\Phi_2(G_D)\ket{+}_{Dn}$ to be arbitrarily close to its normalized version $\tfrac{1}{\widetilde F_{P_d}}\Phi_2(G_D)\ket{+}_{Dn}$. In addition, it consumes one additional ancilla and additional $\mathcal{O}\bigl(\tfrac{1}{\widetilde F_{P_d}}\times Dn(\log n +  \log D)\bigr)$ two-qubit gates for the reflection operators \citep{gilyen2019quantum, lin2022lecturenotesquantumalgorithms}. As a result, the total two-qubit gate complexity becomes $\mathcal{O}\bigl(\tfrac{d}{\widetilde F_{P_d}} \times Dn(\log n +  \log D) \bigr)$. The number of main qubits is $Dn$ and the number of ancillas is $D\lceil \log n\rceil + \lceil \log D\rceil + 4$.
\end{enumerate}
To guarantee the target accuracy $\delta$ of the state preparation, we first consider the error in the subnormalized state $\widetilde F_{P_d}\ket{\psi} = \Phi_2(G_D)\ket{+}_{Dn}$. By \cref{lemmas:poly_approximation_of_exp}, 
\begin{equation}
    \left\| \Phi_2(G_D)\ket{+}_{Dn} - \sum_{i_1,\dots,i_D}^{2^n}\exp(-\tfrac{\beta}{2} \sum_{j=1}^D x_{i_j}^2)\,\ket{i_1,\dots,i_D}\right\|_2 \le \left\| \Phi_2(G_D) - \diag \bigl( \exp(-\tfrac{\beta}{2} \sum_{j=1}^D x_{i_j}^2) \bigr) \right\|_2 \le \varepsilon.
\end{equation}
However, amplitude amplification amplifies not only the success probability but also the error in the solution. Let $\ket{\psi_{\rm target}}$ denote the target quantum state 
\begin{equation}
    \ket{\psi_{\rm target}} = \frac{1}{\widetilde F_{\rm exp}}\sum_{i_1,\dots,i_D}^{2^n}\exp(-\tfrac{\beta}{2} \sum_{j=1}^D x_{i_j}^2)\,\ket{i_1,\dots,i_D}.
\end{equation}
Then, to achieve the target approximation accuracy $ \left\|\ket{\psi} \; - \;  \ket{\psi_{\rm target}} \right\|_2 \le \delta$, note that
\begin{equation}
    \left\|\ket{\psi}  -   \ket{\psi_{\rm target}} \right\|_2 \; \le \; \left\|\ket{\psi} -  \tfrac{\widetilde F_{\rm exp}}{\widetilde F_{P_d}}\ket{\psi_{\rm target}} \right\|_2 + \left\|\tfrac{\widetilde F_{\rm exp}}{\widetilde F_{P_d}}\ket{\psi_{\rm target}} - \ket{\psi_{\rm target}} \right\|_2  \; \le \; \tfrac{\varepsilon}{\widetilde F_{P_d}} + \left\lvert \tfrac{\widetilde F_{\rm exp}}{\widetilde F_{P_d}} -1 \right\rvert.
\end{equation}
Hence, the polynomial approximation accuracy $\varepsilon$ must be chosen such that $\varepsilon + |\widetilde F_{\rm exp} - \widetilde F_{P_d}| \le \widetilde F_{P_d} \cdot \delta$. Using the bounds obtained in \cref{subsec:bounding_the_filling_fraction}, it suffices to choose $\varepsilon$ to satisfy
\begin{equation}
    \varepsilon + \tfrac{4\varepsilon}{F_{\rm exp}} \le \delta \cdot \bigl( F_{\rm exp} + \tfrac{1}{F_{\rm exp}} (2\varepsilon + \tfrac{D}{2^n}) \bigr).
\end{equation}
Rearranging for $\varepsilon$ and assuming $\delta \le \tfrac{F_{\rm exp}}{4}$, we obtain $\varepsilon \le 2\delta \bigl( F_{\rm exp} + \tfrac{D}{2^n F_{\rm exp}}\bigr)$. Furthermore, assuming $\tfrac{D}{2^n} \le F_{\rm exp}^2$, which holds for sufficiently large grid sizes, where $n\in \Omega(D\log \beta)$ it suffices to choose $\varepsilon \le 4\delta F_{\rm exp}$. Finally, plugging in the lower bound on $F_{\rm exp}$ (\cref{lemma:continuous_filling_fraction_bound}), we can definitively set $\varepsilon = 4\delta\cdot \bigl( \tfrac{2}{3\sqrt{\beta}} \bigr)^{\tfrac{D}{2}}$. This implies that the polynomial degree $\mathcal{O}\bigl(\sqrt{D\beta}\log \tfrac{1}{\varepsilon} \bigr)$ becomes a function of $D$, $\beta$, and $\delta$:
\begin{equation}
    d = \mathcal{O}\bigl(\sqrt{D\beta}\,\log \beta^{\tfrac{D}{4}} \log \tfrac{1}{\delta} \bigr).
\end{equation}
Finally, substituting the bound $\frac{1}{\widetilde F_{P_d}} = \mathcal{O}\bigl( (\tfrac{3}{2})^{\tfrac{D}{2}} \beta^{\tfrac{D}{4}} \bigr)$ from \cref{subsec:bounding_the_filling_fraction}
together with the required polynomial degree $d = \mathcal{O}\bigl(\sqrt{D\beta}\,\log \beta^{\tfrac{D}{4}} \log \tfrac{1}{\delta} \bigr)$,
we obtain the announced two-qubit-gate complexity of the $D$-dimensional Gaussian quantum state preparation as a function of $\beta,D,n$ and $\delta$:
\begin{equation}
  \mathcal{O}\bigl( \beta^{\tfrac{D}{4}+\tfrac{1}{2}} \,\log \beta \cdot (\tfrac{3}{2})^{\tfrac{D}{2}} D^{\tfrac{5}{2}}\log D \cdot n \log n \cdot \log \tfrac{1}{\delta} \bigr).
\end{equation}
Suppressing the logarithmic factors $\log n$ and $\log \beta$, and treating $D$ as a constant, we have the simplified form
\begin{equation}
     \mathcal{\widetilde O}\bigl( \beta^{\tfrac{D}{4}+\tfrac{1}{2}} \, n \, \log \tfrac{1}{\delta} \bigr).
\end{equation}
in agreement with \cref{theorem:multivariate_gaussian}.

\section{Generalized state preparation via CHEB-QKAN}\label{subsec:proof_state_prep_via_cheb_qkan}

In the following, we prove \cref{theorem:state_preparation_via_cheb_qkan} by applying the CHEB-QKAN block-encoding to the uniform superposition over the output register, and then using fixed-point amplitude amplification to boost the overlap with the target state to arbitrarily high probability. 

\begin{manualtheorem}{12}[Multivariate state preparation via CHEB-QKAN]
    Let $\varepsilon \in (0, \frac{1}{2})$. We are given access to a controlled diagonal $(1, a_x, \varepsilon_x)$-block-encoding $U_x$ of an input vector $\vec{x} \in [-1,1]^{N}$, and access to $d+1$ controlled diagonal $(1, a_w, \varepsilon_w)$-block-encodings $U_{w^{(r)}}$ of weight vectors $\vec{w}^{(r)} \in [-1,1]^{NK}$. Let $\mathcal{N}^2 := \sum_{q=1}^{K}\left(\frac{1}{N}\sum_{p=1}^{N} \phi_{pq}(x_p)\right)^2$ and $d$ be the maximal degree of Chebyshev polynomials used in parameterization of activation functions $\phi_{pq}$. If $\varepsilon_x \le \frac{\mathcal{N}^2}{144Kd^2}\varepsilon^2$ and $\varepsilon_w \le \frac{\mathcal{N}}{3\sqrt{K}}\varepsilon$, then we can prepare a $\ell_2$ normalized quantum state $\ket{\psi}$ with amplitudes corresponding to a $\emph{CHEB-QKAN}$ layer such that
    \begin{equation}
        \left\|\ket{\psi} - \frac{1}{\mathcal{N}}\sum_{q=1}^{K} \left( \frac{1}{N}\sum_{p=1}^{N} \phi_{pq}(x_p)\right)\ket{q}_k \right\|_2 \le \varepsilon,
    \end{equation}
    The procedure succeeds with arbitrarily high probability by using $\mathcal{O}(\sqrt{K}d^2/\mathcal{N})$ applications of controlled-$U_x$ and controlled-$U_{w^{(r)}}$ and their adjoint versions.
\end{manualtheorem}

\begin{proof}
By Theorem 1 in the main text, we can construct a diagonal block-encoding $U_{\Phi}$ such that
\begin{equation}
    \left\|\bra{0}_{\mathrm{aux}} U_{\Phi} \ket{0}_{\mathrm{aux}} - \mathrm{diag}\left( \frac{1}{N}\sum_{p=1}^{N}\phi_{p1}(x_p), \dots, \frac{1}{N}\sum_{p=1}^{N}\phi_{pK}(x_p)\right) \right\| \le 4d\sqrt{\varepsilon_x} + \varepsilon_w
\end{equation}
using $\mathcal{O}\left(d^2\right)$ applications of controlled-$U_x$ and controlled-$U_{w^{(r)}}$ and their adjoint versions. To prepare $\ket{\psi}$, we can apply $U_{\Phi}$ to the state $\ket{0}_{\mathrm{aux}}\ket{+}_k$ and measure the auxiliary qubits in the state $\ket{0}_{\mathrm{aux}}$. Then,
\begin{align*}
    & \left\|\frac{1}{\sqrt{p}}\bra{0}_{\mathrm{aux}} U_{\Phi} \ket{0}_{\mathrm{aux}}\ket{+}_k - \frac{1}{\sqrt{p}}\frac{1}{\sqrt{K}}\sum_{q=1}^{K} \left( \frac{1}{N}\sum_{p=1}^{N} \phi_{pq}(x_p)\right)\ket{q}_k \right\|_2 \\
    \le& \frac{1}{\sqrt{p}}\left\|\bra{0}_{\mathrm{aux}} U_{\Phi} \ket{0}_{\mathrm{aux}} - \mathrm{diag}\left( \frac{1}{N}\sum_{p=1}^{N}\phi_{p1}(x_p), \dots, \frac{1}{N}\sum_{p=1}^{N}\phi_{pK}(x_p)\right) \right\| \\
    \le& \frac{1}{\sqrt{p}}\left(4d\sqrt{\varepsilon_x} + \varepsilon_w\right),
\end{align*}
where $\sqrt{p} = \| \bra{0}_{\mathrm{aux}} U_{\Phi} \ket{0}_{\mathrm{aux}}\ket{+}_k\|_2$ is the square root of the probability of successful outcome such that $|\sqrt{p} - \frac{\mathcal{N}}{\sqrt{K}}| \le 4d\sqrt{\varepsilon_x} + \varepsilon_w$. We can boost this probability to an arbitrarily high success probability by using $\mathcal{O}(1/\sqrt{p})$ amplitude amplification steps~\citep{yoder2014fixed}.

To bound the total error within $\varepsilon \in (0, \frac{1}{2})$, let $\gamma \ge 4d\sqrt{\varepsilon_x}+\varepsilon_w$. Then we upper bound the total error by $\varepsilon$ so that $\frac{4d\sqrt{\varepsilon_x}+\varepsilon_w}{\sqrt{p}}\le \frac{\gamma}{\mathcal{N}/\sqrt{K} - \gamma} \le \varepsilon$. Rearranging the terms, we find that $\gamma \le \frac{\mathcal{N}}{\sqrt{K}}\frac{\varepsilon}{1+\varepsilon}$. Note that given $\varepsilon \le \frac{1}{2}$, $\frac{2}{3} \le \frac{1}{1+\varepsilon}$. Set $\gamma = \frac{2}{3}\frac{\mathcal{N}}{\sqrt{K}}\varepsilon$. As such, if $\varepsilon_x \le \frac{\mathcal{N}^2}{144Kd^2}\varepsilon^2$ and $\varepsilon_w \le \frac{\mathcal{N}}{3\sqrt{K}}\varepsilon$, then $4d\sqrt{\varepsilon_x} \le \frac{\mathcal{N}}{3\sqrt{K}}\varepsilon$ and the $\ell_2$ error of the prepared state is upper bounded by $\frac{4d\sqrt{\varepsilon_x}+\varepsilon_w}{\sqrt{p}}\le \varepsilon$.

Lastly, to calculate the runtime $\mathcal{O}(1/ \sqrt{p})$, we note that $\sqrt{p}$ is lower bounded by $\frac{\mathcal{N}}{\sqrt{K}} - \gamma$ which can then be lower bounded by $\frac{\mathcal{N}}{\sqrt{K}} -\frac{2}{3}\frac{\mathcal{N}}{\sqrt{K}}\varepsilon$ and furthermore, $\frac{2}{3}\frac{\mathcal{N}}{\sqrt{K}}$. Hence, the total complexity can be found to be $\mathcal{O}(\sqrt{K}d^2/\mathcal{N})$.
\end{proof}

\end{document}